\pgfplotsset{compat=1.17} 
\renewcommand\arraystretch{1.2}
\setlist[enumerate]{label=(\roman*.)}
\newtheorem{remark}{Remark}
\newtheorem{example}{Example}
\newtheorem{definition}{Definition}
\newtheorem{theorem}{Theorem}
\newtheorem{corollary}{Corollary}
\newtheorem{lemma}{Lemma}
\newtheorem{claim}{Claim}
\definecolor{lime}{HTML}{A6CE39}
\DeclareRobustCommand{\orcidicon}{
	\begin{tikzpicture} \draw[lime, fill=lime] (0,0) circle [radius=0.16] node[white] { {\fontfamily{qag}\selectfont \tiny ID} };
	\draw[white, fill=white] (-0.0625,0.095) circle [radius=0.007];
	\end{tikzpicture} \hspace{-2mm}
}
\renewcommand{\bar}{\overline}
\renewcommand{\tilde}{\widetilde}
\renewcommand{\hat}{\widehat}
\colorlet{rosso}{red!90!white}
\definecolor{green}{RGB}{45, 204, 53}
\definecolor{blue}{RGB}{59, 42, 247}
\crefname{lemma}{Lemma}{Lemmata}
\crefname{theorem}{Theorem}{Theorems}
\crefname{claim}{Claim}{Claims}
\crefname{algorithm}{Algorithm}{Algorithms}
\crefname{equation}{}{}
\crefname{definition}{Definition}{Definition}
\crefname{Cla}{Claim}{Claim}
\crefname{corollary}{Corollary}{Corollaries}
\crefname{remark}{Remark}{Remarks}
\crefname{example}{Example}{Examples}
\crefname{figure}{Figure}{Figures}
\crefname{section}{Section}{Sections}
\crefname{table}{Table}{Tables}
\crefname{enumi}{Statement}{Statements}
\crefname{line}{Step}{Steps}
\newcommand{\SSI}{\texttt{SSI}\xspace}
\newcommand{\PNE}{\emph{PNE}\xspace}
\newcommand{\PNEs}{\emph{PNE}s\xspace}
\newcommand{\MNE}{\emph{MNE}\xspace}
\newcommand{\MNEs}{\emph{MNE}s\xspace}
\newcommand{\NASP}{\emph{NASP}\xspace}
\newcommand{\NASPs}{\emph{NASP}s\xspace}
\newcommand{\LCP}{\emph{LCP}\xspace}
\newcommand{\EPEC}{\emph{EPEC}\xspace}
\newcommand{\EPECs}{\emph{EPEC}s\xspace}
\newcommand{\MIP}{\emph{MIP}\xspace}
\newcommand{\NPC}{$\mathcal{NP}$-complete\xspace}
\newcommand{\SigmaTwoP}{$\Sigma^p_2$-hard\xspace}
\newcommand{\q}		{\mathbf{\textcolor{blue}{q}}}
\newcommand{\Cost}	{\mathbf{\textcolor{rosso}{C}}}
\newcommand{\CostQ}	{\mathbf{\textcolor{rosso}{D}}}
\newcommand{\taxb}[1][r]  {\textcolor{rosso}{b^{#1}}}
\newcommand{\Cemm}[1][p] {\Cost^{#1}_{\text{emmision}}}
\newcommand{\qi}[1][p] 	{\q^{#1}}
\newcommand{\qiCap}[1][p] 	{\overline{\mathbf{\textcolor{rosso}{q^{#1}}}}}
\newcommand{\ti}[1][p] 	{\mathbf{\textcolor{blue}{t}}^{#1}}
\newcommand{\tiCap}[1][p] 	{\overline{\mathbf{\textcolor{rosso}{t}}^{#1}}}
\newcommand{\piCeil}[1][C] 	{\overline{\mathbf{\textcolor{rosso}{\pi}}^{#1}}}
\newcommand{\DemInt}[1][C] {\textcolor{rosso}{\alpha^{#1}}}
\newcommand{\DemSlope}[1][C] {\textcolor{rosso}{\beta^{#1}}}
\newcommand{\qimp}[1][C] {\q^{#1}_{\text{imp}}}
\newcommand{\qexp}[1][C] {\q^{#1}_{\text{exp}}}
\newcommand{\Cilin}[1][p]	{\Cost^{#1}}
\newcommand{\Ciquad}[1][p]	{\CostQ^{#1}}
\newcommand{\qAimpI}[1][I\to A] {\q^{#1}_{\text{imp}}}
\newcommand{\piI}[1][I] {\mathbf{\textcolor{green}{\pi}}^{#1}}
\newcommand{\R}{\mathbb{R}}
\newcommand{\Z}{\mathbb{Z}}
\newcommand{\vecN}[1]{ \left(\begin{array}{c}#1\end{array}\right) }
\newcommand{\sol}{\operatorname{SOL}}
\newcommand{\conv}{\operatorname{conv}}
\newcommand{\cl}{\operatorname{cl}}
\title{When Nash Meets Stackelberg}
\author{ Margarida Carvalho\orcidA{} Gabriele Dragotto\orcidB{} Felipe Feijoo\orcidC{}\\ Andrea Lodi\orcidD{} Sriram Sankaranarayanan\orcidE{} }
\begin{document}
\maketitle

\begin{abstract}This article introduces a class of \emph{Nash} games among \emph{Stackelberg} players (\NASPs), namely, a class of simultaneous non-cooperative games where the players solve sequential Stackelberg games. Specifically, each player solves a Stackelberg game where a leader optimizes a (parametrized) linear objective function subject to linear constraints while its followers solve convex quadratic problems subject to the standard optimistic assumption. 
Although we prove that deciding if a \NASP instance admits a Nash equilibrium is generally a \SigmaTwoP decision problem, we devise two exact and computationally-efficient algorithms to compute and select Nash equilibria or certify that no equilibrium exists. 
We employ \NASPs to model the hierarchical interactions of international energy markets where climate-change aware regulators oversee the operations of profit-driven energy producers. By combining real-world data with our models, we find that Nash equilibria provide informative, and often counterintuitive, managerial insights for market regulators.
\end{abstract}

\section{Introduction} \label{sec:intro}

During the past decades, sustained methodological and practical advances in optimization promoted the development of reliable and efficient technologies to solve complex optimization problems. As a result, companies, organizations, and decision-makers often solve optimization problems to plan their operations and improve their efficiency. However, decision-making is rarely an individual task; on the contrary, it comprises several interconnected and self-driven decision-makers. To address the interactive and complex nature of decision-making, we arguably need novel frameworks combining the modeling capabilities of optimization with game theory.
Despite this promising perspective, game-theoretical frameworks are as helpful as our ability to design efficient algorithms to compute their outcomes, for instance, Nash (i.e., simultaneous) and Stackelberg (i.e., sequential) equilibria.
Motivated by these challenges, we provide models, algorithms, and theory concerning the Nash equilibria of games where self-driven players solve optimization problems.

\begin{figure}[t]
	\centering
	\includegraphics[width=0.85\textwidth]{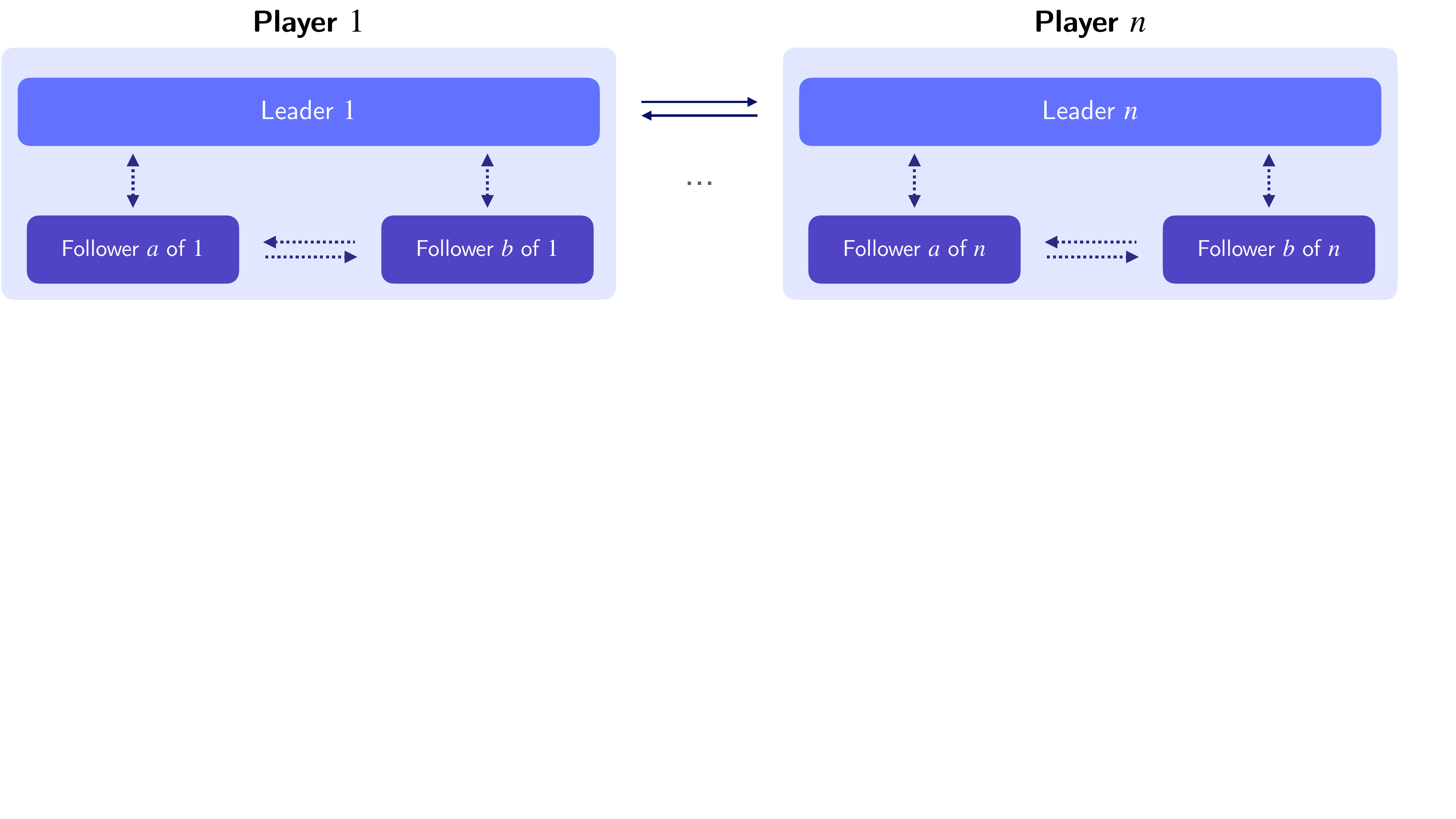}
	\caption{A schematic representation of a \NASP with $n$ players with $2$ followers each. The vertical arrows represent sequential “Stackelberg” interactions, while the horizontal ones are simultaneous “Nash” interactions. }\label{fig:Schematic}
\end{figure}

\paragraph{The Game.}  This paper introduces a class of games capturing a hierarchical system of sequential and simultaneous interactions among decision-makers as shown in \cref{fig:Schematic}. Specifically, we study a class of \emph{complete-information}, \emph{simultaneous} and \emph{non-cooperative} “Nash” games among “Stackelberg” players (\NASPs) where:
\begin{enumerate}[label=(\roman*)]
	\item There are $n$ first-mover players called the {\em leaders}, each of which has a set of lower-level agents called the {\em followers}. Leaders and followers hierarchically decide their strategies by solving optimization problems, and the parameters of such problems are common information.
	\item Each follower is associated with a unique leader and decides by optimizing a parametrized convex quadratic function over a set of linear constraints. Precisely, each follower solves an optimization problem where the objective function and the feasible region are parametrized in the decision variables of its leader. Furthermore, each follower simultaneously interacts with the other followers associated with the same leader; therefore, its objective function is also parametrized in the other followers' variables. 
	\item Each leader optimizes an objective function that is linear in its variables and parametrized in the other leaders' variables through bilinear terms. Furthermore, each leader's optimization problem is constrained by linear inequalities involving the leader's and its followers' variables.
	\item The leaders act simultaneously and non-cooperatively at the upper level. Given the decisions of each leader, its followers play simultaneously and non-cooperatively at the lower level. Specifically, each follower interacts with the followers having the same leader and does not interact with the other leaders' followers.
\end{enumerate}
Whenever we embed the followers' optimization problems inside their associated leader's optimization problem, we obtain a so-called \emph{Stackelberg game}.
In other words, a Stackelberg game is an optimization problem maximizing the leader's objective subject to its constraints \emph{and} the optimality of its followers' optimization problems. 
Since the latter can be equivalently formulated through each follower's Karush–Kuhn–Tucker conditions ($KKT$), we can express each Stackelberg game as a non-convex optimization problem that embeds the hierarchical interactions among a leader and its followers.
As a consequence, we define a \NASP as a simultaneous game where each \emph{player} solves a Stackelberg game. 

\begin{remark}
We refer to the Stackelberg games associated with each leader as the \emph{players}.
The feasible set of a leader is defined by the linear inequalities directly associated with the leader's optimization problem. In contrast, the feasible set of a player is defined by the leader's constraints \emph{and} the optimality of the followers' optimization problems (or, equivalently, the $KKT$ conditions).
\end{remark}

\NASPs fall in the category of \emph{Equilibrium Problems with Equilibrium Constraints} (\EPECs), a class of problems possessing several applications in energy and pricing contexts \citep{Gabriel2012,Hu2007,Ralph,Pozo2011}. The Stackelberg leaders may represent first-mover decision-makers, for instance, governmental agencies and regulatory bodies; their followers may represent market players simultaneously optimizing their benefits under their leader's regulations.
In \cref{ex:first_example}, we showcase a simple \NASP.

\begin{example}
	Consider a \NASP with $2$ players, the {\em Latin} player and the {\em Greek} player, having $2$ followers and $1$ follower, respectively. Let $w$ be the variables of the Latin leader and $y^1$ and $y^2$ be the variables of its followers. Let $\xi$ be the Greek leader's variables and $\chi$ be the variables of its follower. Given $\xi,\chi$ as parameters, the Latin player solves the Stackelberg game
	\begin{subequations}
		\begin{align}
			\min_{w,y} \quad\quad    & c^\top \vecN{w                                                                                                                         \\y} + \left( C\vecN{\xi                                                     \\\chi}
			\right)^\top  \vecN{w\\y}   \label{eq:Lat:Obj} \\
			\text{s.t.} \qquad & Aw + B\vecN{y^1                                                                                                                        \\ y^2}\quad\leq\quad b,   \\
			                         & y^1 \quad\in\quad \arg\min_{y^1} \left\{\frac{1}{2}(y^1)^\top F y^1 + f^\top y^1 + (Gy^2 + Hw)^\top y^1: Lw + Ny^1  \leq d  \right\},  \\
			                         & y^2 \quad\in\quad \arg\min_{y^2} \left\{\frac{1}{2} (y^2)^\top O y^2 + h^\top y^2 + (Py^1 + Qw)^\top y^2: Rw + Sy^2  \leq e  \right\}. 
		\end{align}
		\noindent Similarly, given $w,y=(y^1,y^2)$ as parameters, the Greek player solves the Stackelberg game
		\begin{align}
			\min_{\xi,\chi}\quad\quad &                                                                                                                                                                    
			\alpha^\top \vecN{\xi\\\chi} + \left( \Gamma\vecN{w\\y} 
			\right)^\top \vecN{\xi\\\chi}  \qquad\qquad\qquad\qquad\qquad\qquad\qquad\quad \label{eq:Gr:Obj}  \\
			\text{s.t.} \qquad  & \Phi \xi + \Psi \chi \quad\leq\quad \beta,                                                                                                                         
			\\
			                          & \chi \quad\in\quad \arg\min_\chi \left\{ \frac{1}{2}\chi^\top \Delta \chi +\Phi^\top \chi + (\Upsilon \xi)^\top \chi  :  \Pi\xi + \Omega\chi \leq \gamma \right\}. 
		\end{align}
		\label{eq:LatinGreek}
	\end{subequations}
	In the above \NASP formulation, $F$, $O$ and $\Delta$ are symmetric positive semi-definite matrices, $c$, $b$, $d$, $e$, $f$, $h$, $\alpha$, $\beta$, $\gamma$ are parameter vectors of appropriate dimensions. The remaining objects are matrices of appropriate dimensions. The objective function of each player is parametrized with respect to the variables of the other players. The objective function of each follower is parametrized in (i.) its leader's variables, and (ii.) the variables of the other followers associated with its leader.
	\label{ex:first_example}
\end{example}

\paragraph{Applications. }
\NASPs provide a flexible modeling framework for many economic markets where several regulatory bodies interact and hierarchically regulate a set of lower-level economic agents. We outline three different applications related to energy, vaccine production, and insurance markets. 
First, the framework of \NASPs is mainly motivated by international energy markets with climate change-aware regulatory authorities and profit-maximizing energy producers. In this context, each energy producer, i.e., each follower, competes in its respective domestic market regulated by a single regulatory agency through taxes (e.g., a carbon tax) and production caps. Each regulatory agency, i.e., each leader, negotiates environmental agreements for trading energy and interacts with other regulatory authorities. As we will show, the \NASP abstraction practically captures the structure of such energy markets and it provides a general framework to derive practical insights into the effectiveness of environmental and regulatory initiatives. Recently, \citet{anjos_multinational_2022} studied a complex multinational carbon-credit system for the energy market exploiting the models and the algorithms available in this work.
Second, \NASPs can model complex drug trade markets. For instance, the COVID-19 vaccine production and exchange posed a severe threat to the world's immunization programs during the pandemic. Some experts warned of \emph{vaccine nationalism} \citep{vacnationalism_2021}, a phenomenon encompassing the strict export and import regulations several countries have imposed on vaccines and the raw materials required for their production \citep{fleming_2021, boffey_2021, boffey2_2021}.
In this context, countries would act as leaders by regulating the trade of vaccines and providing incentives to domestic vaccine producers that would serve as followers. The leaders' objective functions could model several tactical requirements; for instance, they can prioritize the production of doses reserved for vulnerable classes of the population or incentivize  exports to neighboring countries.
Third, \NASPs can model several dynamics associated with insurance markets, specifically, hierarchical multi-insurer games embedding \emph{re-insurance} mechanisms \citep{gordon2002economics,cashell2004economic}. Companies acting as followers contract insurance services to protect their operations from disruptions, e.g., cyberattacks. The insurers, acting as leaders, sell insurance products to their followers. The insurers may also mutually protect their portfolios to avoid losses due to large-scale disruptions, for instance, natural disasters.

\subsection*{Primary Contributions} 
We introduce \NASPs, a class of games encompassing a series of hierarchical and simultaneous interactions among several decision-makers solving optimization problems. While the solution concept for each Stackelberg game is the Stackelberg equilibrium, namely, a solution where the followers' decisions are optimal given the leader's decisions, we employ the Nash equilibrium as the standard solution concept for \NASPs. In a Nash equilibrium, players are mutually optimal and cannot unilaterally deviate from the equilibrium without diminishing their benefits, i.e., decreasing their objective function value.
We distinguish between Pure-Strategy Nash equilibria (\PNEs) and Mixed-Strategy Nash Equilibria (\MNEs); players employ deterministic strategies in the former, while in the latter, players randomize over their strategies. Unless otherwise stated, we focus on the more general concept of \MNE. We provide several theoretical, algorithmic, and practical contributions concerning the existence, computation, and interpretation of Nash equilibria in \NASPs. Precisely:

\begin{enumerate}[label=(\roman*)]
	\item We characterize the hardness of deciding whether a given instance of \NASP admits a Nash equilibrium or not as a \SigmaTwoP decision problem (\cref{sec:Hardness}). In other words, as long as $\mathcal{N}\mathcal{P} \neq \Sigma_2^p$, it is impossible to represent this problem as an integer program of polynomial size~\citep{Woeginger2021}.
	Under some mild conditions, we show that some \NASP instances always admit an \MNE (\cref{thm:BndObv}).
	      	      	      
	\item We provide the first exact and computationally-efficient algorithm to compute and select Nash equilibria in \NASPs (\cref{sec:Algo}). In contrast to the previous literature, our algorithm is exact and terminates with either a Nash equilibrium or a proof of its non-existence. We introduce two additional variants of our algorithm: a variant built upon an iterative inner-approximation scheme and a variant computing \PNEs (\cref{sec:enhance}). Our algorithms are based on the insight that any \NASP possesses an equivalent \emph{convex} representation (\cref{thm:Alg,rem:convHullisEnuf}).
	      	      	      
	\item We apply \NASPs to model international energy markets where climate-change aware regulators oversee the operations of profit-driven energy producers (\cref{sec:Compute}). We provide a detailed computational analysis of the performance of our algorithms on a set of synthetic energy-market instances. Furthermore, by combining real-world data and our models, we derive informative yet counterintuitive managerial insights from the \MNEs. We also provide an analysis of the Chilean-Argentinean energy market and derive some insights for the policymakers.
	      	      	      
\end{enumerate}

We organize the manuscript as follows. In \cref{sec:literature}, we provide a literature review, while in \cref{sec:Definitions}, we introduce the background definitions. In \cref{sec:Hardness},  we present an overview of the computational complexity results. In \cref{sec:Algo}, we present and characterize the algorithm to find equilibria in \NASPs. In \cref{sec:enhance}, we introduce an inner approximation algorithm and an algorithm to exclusively compute \PNEs. In \cref{sec:Compute}, we present the energy model, the computational tests, and the managerial insights. Finally, in \cref{sec:conclusions}, we provide some concluding remarks.

\section{Literature Review} 
\label{sec:literature}
\citet{Nash1951, Nash1950} introduced the concept of Nash Equilibrium in the context of finite games, i.e., games with a finite number of players and strategies. Nash proved that if the game is finite, there always exists an \MNE.
Within the optimization community, games expressed through the parametrized optimization problems associated with the players are often known as Nash equilibrium problems or Nash games. The Nash equilibrium concept has dramatically changed several scientific fields due to its flexibility and interpretability. 
Several authors employed Nash equilibria to capture the outcome of structured interactions of players solving optimization problems. For instance, gas market modeling \citep{egging2010world,feijoo2016north,Sankaranarayanan2018,feijoo2018future,holz2008strategic,egging2008complementarity,Stein2018}, cross-border kidney exchange models \citep{Carvalho2017c,Carvalho2022kidney}, competitive lot-sizing models \citep{Li2011,Carvalho2018}, knapsack and network-formation games \citep{Dragotto_2021_zero,Carvalho_2021_cut}, and fixed-charge transportation models \citep{Merx2018} employed the concept of Nash equilibrium.

In contrast to simultaneous games, sequential games partition the set of players into different groups, each playing in a predetermined round.
When there are two rounds, the game is known as a Stackelberg game \citep{Stackelberg1934Original}, with the first-round players being the leaders and the second-round players being the followers. In the optimization literature, Stackelberg games where players solve optimization problems are related to bilevel programming \citep{colson_bilevel_2005}, and their applications span several domains. For instance, \citet{Bard1998,Bard2000} modeled taxation strategies in the context of biofuel production, \citet{Brotcorne2008,Labbe2013,grimm_optimal_2021} modeled bilevel pricing problems, and \citet{hobbs2000strategic,gabriel2010solving,feijoo2014design} modeled pricing and environmental policies for energy markets, with power generators being leaders and network operators being followers.

\paragraph{Algorithms. } When multiple Stackelberg leaders, each possibly having multiple followers, interact, the game belongs to the family of \EPECs. Their application in economics often involves a multi-leader, multi-follower game \citep{pang_quasi-variational_2005}.
Several authors employed \EPECs to represent economic markets involving several decision-makers. \citet{sherali1984} introduced \EPECs where both leaders and followers produce a homogeneous commodity, and \citet{Ralph} and \citet{Hu2007} studied the existence of \PNEs in some specialized classes of \EPECs arising in electricity markets. \citet{demiguel2009stochastic} crafted the concept of stochastic multi-leader Stackelberg-Nash-Cournot equilibrium for a particular form of investment-production interactions.
Regarding methodological contributions, \citet{Gabriel2012} provided an iterative algorithm to compute \PNEs in a restricted class of \EPECs where followers from distinct leaders interact. \citet{leyffer2010solving} introduced an alternative solution concept based on a nonlinear programming reformulation.  \citet{Kulkarni2014S,Kulkarni2015e} analyzed \EPECs with shared constraints and introduced solution concepts and algorithms when the players' objective functions fulfill specific properties. Recently, \citet{devine_strategic_2022} introduced an \EPEC model for electricity markets with price-making firms with market power and price-taking firms. 
In contrast to this work, the previous works on \EPECs either: (i.) proposed algorithms to exclusively compute \PNEs, or (ii.) focused on weaker notions of equilibria. To the best of our knowledge, this is the first work providing an algorithm to compute exact \MNEs for the large class of \EPECs that \NASPs represent.

\paragraph{Complexity and Equilibria. } The two paramount issues concerning Nash equilibria are existence, namely, determining when at least one equilibrium exists, and computation, namely, devising efficient algorithms to compute equilibria. The original proof of existence from \citet{Nash1951, Nash1950} holds only for finite games, is non-constructive, and provides no methodology for computing or analytically constructing equilibria. Indeed, even if an equilibrium always exists, as in finite games or games with specific structures \citep{Rosenthal1973,Pia2017}, the problem of computing it is often not trivial, even in simple $2$-player cases \citep{Chen2006}. Furthermore, many variations of the decision version of the equilibrium problem are \NPC \citep{Gilboa1989}, for instance, the problem of determining an equilibrium with specific properties. 
In general, an equilibrium may not even exist when players solve parametrized non-convex problems. For example, when players solve parametrized integer programs, \citet{Carvalho2018Complexity,Carvalho2020computing} proved that deciding if a Nash equilibrium exists is \SigmaTwoP.

In the context of Stackelberg games with a single leader, however, the standard solution concept is the one of Stackelberg equilibrium. The seminal work of \citet{jeroslow1985polynomial} proved that the complexity of determining if a sequential game admits an equilibrium rises one level up in the polynomial hierarchy for every additional round.

\section{Definitions and Background}
\label{sec:Definitions}

This section provides the basic notations and definitions we employ throughout the paper. As a standard notation in game theory, let the operator $\left( \cdot \right)^{-i}$ denote $\left( \cdot \right)$ except $i$. For any pair of vectors  $\nu \in\R^t$ and $\tau\in\R^t$, let $\nu \perp \tau$ be equivalent to $\nu^\top \tau = 0$. Given $M\in\R^{t\times t}$ and $q\in\R^t$, the linear complementarity problem (\LCP) is the problem of finding, if any, a vector $\nu \in \R^t$ such that $0 \leq \nu \perp (M\nu+q) \geq 0$ \citep{Facchinei2015b,Facchinei2015a,Cottle2009}. 
We say that the optimization problem in $y\in\R^{n_f}$ has a \emph{simple parameterization with respect to $x\in\R^{n_\ell}$} if the problem is in the form of $\min_{y\in\R^{n_f}} \{f(y) + (Cx)^\top y : y \in \mathscr{F}, Ax + By \leq b \}$, where $C$, $A$, $B$, $b$ are matrices and vectors of appropriate dimensions,  $\mathscr{F}\subseteq \R^{n_f}$, and $f:\R^{n_f}\to\R$. 

\subsection{Simultaneous Games}

\begin{definition}[Simultaneous “Nash” Game]\label{def:NashGame}
	A \emph{simultaneous game} $P$ among $n$ players is a finite tuple of optimization problems $P = \left( P^1,\dots,P^n \right)$, where each player $i$ solves $P^i=P^i(x^i,x^{-i})=\min_{x^i\in\R^{n_i}} \lbrace f^i(x^i,x^{-i}) : x^i \in \mathscr{F}^i \}$ with $f^i$ and $\mathscr{F}^i$ being the objective function and the feasible set of $i$, respectively. The game has \emph{complete information} if every player knows    $\mathscr{F}^1,\dots,\mathscr{F}^n$ and $f^1,\dots,f^n$.
\end{definition}

Depending on the structure of each optimization problem $P^i$, we characterize the game as
\begin{enumerate*}[label=(\roman*)]
	\item  \emph{simple} if, for every player $i$, $f^i(x^i,x^{-i}) =  \frac{1}{2} (x^i)^\top Q^ix^i + (c^i)^\top x^i + (C^i x^{-i})^\top x^i$ where $Q^i$ is a positive semi-definite matrix, and $c^i$ and $C^i$ are a vector and a matrix of appropriate dimensions, respectively, or
	\item \emph{linear}, if the game is simple and $Q^i= 0$ for all $i$, or
	\item \emph{facile}, if the game is simple and $\mathscr{F}^i$ is a polyhedron for any $i$.
\end{enumerate*} 
Furthermore, for any player $i$, we call any $x^i \in \mathscr{F}^i$ a \emph{pure} strategy. If a player randomizes over its pure strategies, we call the resulting strategy \emph{mixed}. Formally, a mixed strategy $\sigma^i$ for player $i$ is a probability distribution over $\mathscr{F}^i$.

\begin{definition}[Nash equilibrium]\label{Def:MNE}
	Given a simultaneous game $P$, $\sigma=(\sigma^1,\dots,\sigma^n)$ is an \MNE for $P$ if $\mathbb{E}_{X\sim\sigma}\left[f^i(X^i, X^{-i})\right]  \leq \mathbb{E}_{X^{-i}\sim\sigma^{-i}}\left[ f^i(\tilde x^i, X^{-i})\right]$ for any player $i$ and for any $\tilde x^i \in \mathscr{F}^i$. If $\sigma^i$ has a {\em singleton} support for any $i$, then $\sigma$ is a \PNE.
\end{definition}

Intuitively, \cref{Def:MNE} states that in a Nash equilibrium $\sigma=(\sigma^1,\dots,\sigma^n)$, no player $i$ can \emph{unilaterally deviate} from $\sigma^i$ to any other strategy without increasing the expectation of its objective function value, i.e., without being strictly worse off compared to the equilibrium strategy. 

Whenever $P$ is a facile Nash game, each player's problem is convex in its variables. Therefore, the set of points $x=(x^1,\dots,x^n)$ concurrently satisfying the $KKT$ conditions of \emph{every} problem $P^i$ form the set of \PNEs for $P$. Equivalently, these conditions could be reformulated as an \LCP involving a matrix $M$ and a vector $q$ such that every solution to $0 \leq x \perp (Mx+q) \geq 0$ is a \PNE for $P$ and every \PNE of $P$ solves the \LCP \citep{Cottle2009}.

\subsection{Stackelberg Games}
In contrast to simultaneous games, the players of Stackelberg games play in two rounds \citep{candler1982linear}. Let $w \in\R^{n_\ell}$ be the leader's variables. After the leader plays, its followers compete in a simultaneous game parametrized in the leader's variables $w$. Specifically, in this paper, the followers play a simultaneous game $P(w)$ that has a simple parameterization with respect to $w$, i.e., each problem $P^1(w), \dots, P^n(w)$ has a simple parameterization with respect to $w$.

\begin{definition}[Stackelberg game] \label{Def:bilevel}
	Let $P(w)$ be a simultaneous game with a simple parametrization with respect to $w \in\R^{n_\ell}$, $\sol(P(w))$ be its solution set, $f$ be a function such that $f:\R^{n_\ell+n_f}\to\R$, and $\mathcal{S}\subseteq \R^{n_\ell+n_f}$. A {\em Stackelberg game} $S$ is the optimization problem
	$\displaystyle S=\min_{w\in\R^{n_\ell}, y\in\R^{n_f}} \lbrace f(w,y) : (w,y) \in \mathcal{S},\; y \in \sol(P(w)) \rbrace$. 
\end{definition}

If $P(w)$ is a facile simultaneous game with a simple parameterization with respect to the leader's variables $w$, $\mathcal{S}$ is a polyhedron, and $f(w,y)$ is a linear function, we say $S$ is a \emph{simple} Stackelberg game. From a game-theory perspective, the solution of a Stackelberg game is a so-called \emph{subgame perfect Nash equilibrium}. 
We remark that \cref{Def:bilevel} implies that the Stackelberg game is \emph{optimistic}, i.e., if $P(w)$ has more than one solution, then the followers collectively select the solution $y \in \sol(P(w))$ maximizing the leader's objective function \citep{dempe_solution_2014}. The optimistic assumption may be quite restrictive when $|\sol(P(w))|>1$, since it exposes the issue of equilibria selection for the followers' game. However, it also has a strong economic interpretation. The leader can, in fact, persuade followers to play a favorable solution by transferring utility, i.e., by paying followers an arbitrarily small value. Whenever the followers solve strictly convex quadratic programs, as in our application in \cref{sec:Compute}, the optimistic assumption is not restrictive since there is always a unique followers' equilibrium (i.e., $|\sol(P(w))|=1$).
Recently, \citet{Basu2018a} provided an extended formulation for the feasible region of a simple Stackelberg game. Specifically, the authors proved that the feasible region $\{ (w,y) \in \mathcal{S},\; y \in \sol(P(w))\}$ of a simple Stackelberg game is the union of finitely many polyhedra. We will later employ this result to prove that the convex hull of the union of these polyhedra is the space of mixed strategies for each \NASP player.

\subsection{\NASPs} 
Combining the previous definitions, a \NASP is a simultaneous game with complete information where each player solves a simple parametrized Stackelberg game. 

\begin{definition}[NASP]\label{Def:NASP}
	A \NASP $N=(S^1,\dots,S^n)$ is a complete-information simultaneous game among $n$ players, where each player $i$ solves the simple Stackelberg game $S^i(x^i,x^{-i})=\min_{\substack{x^i\in\R^{n_\ell+n_f}}} \lbrace (c^i)^\top x^i + (C^ix^{-i})^\top x^i : x^i \in \mathscr{F}^i \rbrace$ with $\mathscr{F}^i:=\{x^i=(w^i,y^i): x^i \in \mathcal{S}^i,\; y^i \in \sol(P(w^i))\}$, and $\mathcal{S}^i$ being a polyhedron.
\end{definition}

In \cref{Def:NASP}, we let $x^i \in \R^{n_\ell+n_f}$ be a joint representation of the leader's and the followers' variables for the $i$-th \NASP player. We refer to $\mathscr{F}^i$ as the $i$-th player feasible region, and we say that $\mathscr{F}^i$ is \emph{bounded} if $\mathcal{S}^i$ is a polytope and $\sol(P(w^i)$ is finite.

\begin{remark}
	The Nash equilibrium of \cref{Def:MNE} directly applies to the \NASPs of \cref{Def:NASP}. We remark that, based on our previous definitions, we only consider \NASPs where each Stackelberg game fulfills the optimistic assumptions.
	Therefore, when we claim a \NASP does not have an \MNE, we claim that no \MNE exists when each leader's followers select the most favorable equilibrium for the leader.
	If there is no \MNE satisfying the optimistic assumption, there \emph{might} exist an \MNE where the players do not fulfill such assumption.
\end{remark}

\section{Hardness of Finding a Nash Equilibrium}\label{sec:Hardness}

We characterize the computational complexity associated with deciding whether a \NASP admits an \MNE or not as \SigmaTwoP.  This result holds even for the simplest of \NASPs called a \emph{trivial} \NASP $N=(S^1, S^2)$. In the latter, two leaders have a single follower each, and each follower solves a parametrized linear problem. We formalize our results in \cref{thm:Hardness}, \cref{thm:BndObv}, and \cref{thm:MixHard}.

\begin{theorem}\label{thm:Hardness}
	It is \SigmaTwoP to decide if a trivial \NASP has a \PNE.
\end{theorem}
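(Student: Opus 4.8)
The plan is to reduce from \SSI (SUBSET SUM INTERVAL), an appropriate $\Sigma^p_2$-complete problem: given integers $a_1,\dots,a_n$ and bounds $\ell \le u$ (with the promise structure that makes the two-quantifier version complete), decide whether there is a subset $S$ with $\ell \le \sum_{i\in S} a_i \le u$ that is ``maximal'' in the sense that a second, inner quantifier fails — i.e. the natural $\exists\forall$ statement about subset sums. The high-level strategy mirrors \citet{jeroslow1985polynomial}: the outer $\exists$ quantifier (a subset choice) will be encoded by one leader's upper-level binary-like variables, the inner $\forall$ quantifier will be encoded through that leader's follower (a linear program whose optimistic optimum realizes the adversarial inner choice), and the \emph{second} leader will be used purely as a ``gadget'' that has a profitable deviation \emph{unless} the first leader's strategy actually corresponds to a valid \SSI witness. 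In this way, a \PNE of the trivial \NASP exists if and only if the \SSI instance is a yes-instance.

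First I would set up the first leader's bilevel program so that its feasible region, via the extended-formulation result \cref{thm:BasuSri}, is exactly a union of polyhedra encoding ``$x$ is (the indicator vector of) a subset $S$ with $\ell \le \sum_i a_i x_i \le u$, together with the follower's optimal response that computes the worst-case inner quantifier value.'' Concretely, the lower-level LP of leader~1 is parameterized by $x$ and its optimal value (picked optimistically) equals the inner $\forall$-expression evaluated at $S$; linearity of the lower level is enough because the inner problem of \SSI is itself an optimization over a polytope. Integrality of the $x_i$ is not assumed directly — instead, as in Jeroslow-style constructions, I would force $x_i\in\{0,1\}$ implicitly by making fractional choices strictly dominated in the leader's objective, or by embedding the $0/1$ constraint into the union-of-polyhedra structure that \cref{thm:BasuSri} permits. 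The second leader is then given a trivial one-dimensional bilevel whose unique best response to leader~1's strategy is to ``match'' it, and whose objective is rigged (through the bilinear coupling term $\Gamma\binom{x}{y}$ in \eqref{eq:Gr:Obj}) so that leader~2 is at equilibrium precisely when leader~1's reported inner value certifies membership in \SSI; otherwise leader~2 strictly prefers to deviate, destroying the equilibrium.

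The main steps, in order: (1) fix the precise $\Sigma^p_2$-complete source problem and its $\exists\forall$ normal form; (2) build leader~1's upper-level feasible set and linear lower level, invoking \cref{thm:BasuSri} to argue the construction stays within the ``simple Stackelberg game'' / trivial-\NASP format and is polynomial-size; (3) build leader~2's gadget bilevel and the coupling terms; (4) prove the forward direction — from a yes-instance of \SSI, exhibit an explicit \PNE (leader~1 plays the witness subset with its optimistic follower response, leader~2 plays its matching response), checking no unilateral deviation helps either leader; (5) prove the reverse — any \PNE of the \NASP must have leader~1 playing a valid \SSI witness, because leader~2's deviation incentive rules out every other leader-1 strategy, and leader~1's own optimality plus the optimistic lower level forces the subset-sum and inner-quantifier conditions. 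Finally, membership of the decision problem in $\Sigma^p_2$ (so the result is tight) follows from the standard ``guess the leaders' strategies and the polyhedron indices, then verify in coNP/with an NP oracle'' argument, though the theorem as stated only claims hardness.

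The hard part will be step (3)–(5): designing leader~2's gadget so that (a) it genuinely has no best response — i.e. leader~1 is knocked out of equilibrium — exactly on the ``no'' side, while (b) not accidentally creating spurious equilibria on the ``no'' side, and (c) keeping everything within the linear-objective, linear-lower-level restrictions of a \emph{trivial} \NASP (\cref{Def:TrivNASP}), which is much tighter than the general \NASP and forbids convenient quadratic penalties. Getting the optimistic tie-breaking to do exactly the work of the inner $\forall$ quantifier — and verifying that the leader cannot exploit the optimism to cheat — is the delicate technical core; the rest is bookkeeping on polyhedral encodings and deviation checks.
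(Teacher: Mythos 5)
There is a genuine gap, and it sits exactly where you flag ``the delicate technical core'': you propose to realize the inner $\forall$ quantifier of \SSI through leader~1's \emph{follower}, i.e.\ ``a linear program whose optimistic optimum realizes the adversarial inner choice.'' This cannot work. The inner condition of \SSI (``for all $I\subseteq\{1,\dots,k\}$, $\sum_{i\in I}q_i\neq s$'') is a universal statement over exponentially many subsets and is coNP-hard to verify; it is not an optimization over a polytope, so ``linearity of the lower level'' is not enough. Worse, the lower level in a trivial \NASP is selected \emph{optimistically}, so leader-plus-follower is a min--min structure: the follower can never act as an adversary, and indeed a single simple Stackelberg game of this form is only $\mathcal{NP}$-hard, not \SigmaTwoP. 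Your plan therefore has no mechanism left that actually produces the second level of the polynomial hierarchy. You also misread the quantifier structure of \SSI: the outer existential witness is the \emph{integer} $s$ in the interval $[p,t)$, not a subset; the subsets live under the inner $\forall$.

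The paper resolves this differently, and the difference is essential. The followers are used only to enforce disjunctions ($y_i\geq -x_i$, $y_i\geq x_i-1$, $y_i\geq 0$ with $\sum_i y_i$ minimized forces $x_i\leq 0$ or $x_i\geq 1$), i.e.\ to make each leader's feasible set a union of polyhedra that emulates binary variables; they do no quantifier work at all. The outer $\exists$ is the Latin leader's binary encoding of $s$, and the inner $\forall$ is realized by the \emph{equilibrium condition for the Greek leader}: the Greek leader's feasible deviations range over indicator vectors of subsets, its objective is rigged (with large penalty terms and the budget constraint \cref{eq:Hard:Gr:con}) so that a deviation is strictly profitable exactly when some subset sums to $s$. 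Hence a \PNE exists iff some $s$ in the interval admits no representing subset, which is the \SSI question. So the ``gadget leader'' you describe is not a passive matcher certifying a value reported by a follower --- it is the player whose deviation space carries the universal quantifier. Until you move the $\forall$ out of the optimistic lower level and into the second leader's best-response condition (and then actually construct the coupling objective and prove both directions, which your steps (3)--(5) leave open), the reduction does not go through.
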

\begin{corollary}\label{thm:BndObv}
	If each player's feasible set in a \NASP is bounded, an \MNE exists.
\end{corollary}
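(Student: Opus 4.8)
The plan is to relax each leader's (generally non-convex) feasible region to its convex hull, apply a classical concave-game equilibrium theorem to the relaxed game, and then decompose the resulting pure equilibrium into a finitely supported mixed equilibrium of the original \NASP.

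First I would record the structural facts. By \cref{thm:BasuSri}, the feasible set $S^i$ of each leader $i$'s simple Stackelberg game $P^i$ is a finite union of polyhedra; combined with the boundedness hypothesis, $S^i$ is a finite union of polytopes, so $\bar S^i := \conv(S^i)$ is itself a polytope, and every vertex of $\bar S^i$ is a vertex of one of the constituent polytopes and hence belongs to $S^i$. Moreover, since a \NASP is a \emph{linear} Nash game (\cref{Def:NASP}), each leader's objective has the form $f^i(x^i;x^{-i}) = (c^i + C^i x^{-i})^T x^i$ (as in \eqref{eq:Lat:Obj}–\eqref{eq:Gr:Obj}): linear in $x^i$ for fixed $x^{-i}$ and, overall, multilinear in $(x^1,\dots,x^k)$, hence continuous.

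Next I would introduce the convexified game $\bar N = (\bar P^1,\dots,\bar P^k)$ in which leader $i$ minimizes the same objective $f^i$ over $\bar S^i$. Each $\bar S^i$ is nonempty (assuming the \NASP is well posed), compact, and convex, and $f^i$ is continuous and convex in $x^i$; thus $\bar N$ is a facile Nash game with compact feasible sets, and by the Debreu–Glicksberg–Fan theorem \citep{debreu_social_1952,glicksberg_further_1952,fan_fixed-point_1952} it admits a \PNE $\bar x = (\bar x^1,\dots,\bar x^k)$. I would then transfer $\bar x$ back to the original game: writing each $\bar x^i$ as a convex combination $\sum_j \lambda^i_j v^i_j$ of vertices $v^i_j$ of $\bar S^i$ — all of which lie in $S^i$ — let $\nu^i$ be the finitely supported Borel distribution on $S^i$ that assigns mass $\lambda^i_j$ to $v^i_j$, so that the barycenter of $\nu^i$ is $\bar x^i$. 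Using the multilinearity of $f^i$ and the independence of the $\nu^j$, one obtains $\mathbb{E}(f^i(\nu^i,\nu^{-i})) = f^i(\bar x^i;\bar x^{-i})$ and $\mathbb{E}(f^i(\tilde x^i,\nu^{-i})) = f^i(\tilde x^i;\bar x^{-i})$ for every $\tilde x^i \in S^i$; since $\bar x$ is a \PNE of $\bar N$ and $S^i \subseteq \bar S^i$, the former is at most the latter, so $\nu = (\nu^1,\dots,\nu^k)$ is an \MNE of the \NASP in the sense of \cref{Def:MNE}.

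The step requiring the most care is the equivalence ``mixed equilibria of the \NASP $=$ pure equilibria of the convex-hull game'': it hinges on the payoffs being multilinear (so expectations commute with $f^i$ and the convexification is exact) and on the polyhedrality of the $S^i$ (so that the vertices of $\conv(S^i)$ are genuine feasible points, which is what keeps each $\nu^i$ finitely supported and supported inside $S^i$). With non-multilinear payoffs or non-polyhedral strategy sets the argument would fail — consistent with the fact that, per \cref{thm:Hardness}, even deciding the existence of a \PNE is \SigmaTwoP.
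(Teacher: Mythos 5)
Your proof is correct, but it takes a different route from the paper's. The paper's own argument is shorter and purely combinatorial: since each leader's objective is linear in their own variables, best responses can be taken at extreme points of $\conv(\mathscr{F}_i)$, which -- by \cref{thm:BasuSri} plus boundedness -- is a polytope with finitely many extreme points (all lying in $\mathscr{F}_i$); the \NASP therefore reduces to a game with finitely many pure strategies per player, and Nash's theorem for finite games gives an \MNE directly. You instead relax each player to the convex-hull game, invoke the Debreu--Glicksberg--Fan existence theorem for the resulting facile (compact, convex, payoffs linear in own variables) game to get a \PNE $\bar x$, and then pull $\bar x$ back by vertex decomposition, using multilinearity of the payoffs and independence of the mixtures to show the resulting finitely supported profile satisfies \cref{Def:MNE}. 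In effect you are re-deriving, as an existence argument, the machinery behind \cref{thm:Alg} (the correctness of \cref{alg:Prelim}), whereas the paper keeps that equivalence for \cref{sec:Algo} and proves the corollary with the lighter finite-game reduction. What each buys: the paper's route needs only Nash's finite-game theorem and is two lines shorter, though it leaves implicit why deviations to non-extreme points of $\mathscr{F}_i$ cannot help (the same linearity fact you spell out); your route makes that step explicit and is constructive in the sense that it matches the algorithm used later to actually compute the \MNE. Two small remarks: the fact you need about vertices is just that extreme points of $\cl\conv(\mathscr{F}_i)$ lie in $\mathscr{F}_i$ (Milman-type argument for the hull of a compact set), which is what your decomposition uses; and your parenthetical nonemptiness caveat is shared implicitly by the paper's proof as well, so it is not a discrepancy.
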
 
\begin{theorem}\label{thm:MixHard}
	It is \SigmaTwoP to decide if a trivial \NASP has an \MNE.
\end{theorem}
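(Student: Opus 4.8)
The plan is to leverage the machinery already built for \cref{thm:Hardness} together with \cref{thm:BndObv}, and to push the \SSI reduction through in a way that forces the \emph{only} candidate equilibrium to be pure. The key observation is that \cref{thm:BndObv} gives us existence of an \MNE whenever every player's feasible set is bounded; so to separate ``\PNE exists'' from ``\MNE exists'' we must work with an instance in which at least one leader has an \emph{unbounded} feasible region, and design the payoffs so that any mixed strategy with support of size $\geq 2$ is strictly dominated. Concretely, I would start from the trivial \NASP produced by the reduction in the proof of \cref{thm:Hardness} (the Latin/Greek construction driven by a \SSI instance) and augment one leader --- say the Latin leader --- with an extra unbounded coordinate $z \geq 0$ that enters its own objective as a strictly convex-looking term realized through the bilevel structure (a follower variable tracking $z$ linearly, penalized quadratically in the follower's cost, which by the optimistic reformulation and \cref{thm:BasuSri} yields a term strictly convex in $z$ in the leader's effective payoff). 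Strict convexity of a player's payoff in its own decision variable is exactly what kills nondegenerate mixing: against any fixed profile of the opponents, the best response is the unique minimizer, so in any \MNE that player plays a pure strategy.

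The core of the argument then runs as follows. First, set up the augmented game $\mathcal{G}$ so that (i) if the underlying \SSI instance is a ``yes'' instance, $\mathcal{G}$ has a \PNE --- which, being pure, is in particular an \MNE; and (ii) if it is a ``no'' instance, $\mathcal{G}$ has no \PNE. For (ii) I then need: no \MNE exists either. Here I use the strict-convexity gadget: in any putative \MNE, the Latin leader plays purely (unique best response to whatever mixed strategy the Greek leader uses, by strict convexity in its own variable), and --- by the same device applied symmetrically, or by the structure of the original reduction in which the Greek leader's relevant decisions are discrete/bounded and its indifferences are resolved optimistically --- the Greek leader's equilibrium strategy must also concentrate on a single point. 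Thus every \MNE of $\mathcal{G}$ collapses to a \PNE, and since there is none, $\mathcal{G}$ has no \MNE. Combining (i) and (ii), deciding \MNE existence for $\mathcal{G}$ decides the \SSI instance, so the problem is \SigmaTwoP; hardness of \SSI (or whatever $\Sigma^p_2$-complete problem underlies \cref{thm:Hardness}) transfers. Membership-side remarks, if needed, are inherited from the same discussion as for \PNE.

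The delicate step --- and the one I would spend the most care on --- is ensuring that mixing by the Greek leader is genuinely excluded in the ``no'' case. Strict convexity handles the Latin leader cleanly, but the Greek leader in the base reduction is the one whose feasible set encodes the existential ``guess'' of the \SSI certificate, and a priori it could randomize over several guesses to manufacture an \MNE even when no \PNE exists. So I would either (a) also attach an unbounded strictly-convex coordinate to the Greek leader, symmetrizing the gadget, and argue the two gadgets do not interfere with the original payoff-coupling that encodes the reduction; or (b) show directly that the bilinear coupling term $(Gx)^Ty$ between the leaders, combined with the optimistic lower-level selection, makes the Greek leader's effective objective strictly convex on the affine hull of any two of its candidate strategies, so mixing over them is strictly suboptimal against the Latin leader's pure response. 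Option (a) is cleaner to write but requires checking that boundedness is violated for ``only one'' leader is not actually needed --- \cref{thm:BndObv} is only a sufficient condition for existence, so adding unbounded gadgets to both leaders is harmless for the hardness direction. I also need to double-check the encoding sizes stay polynomial so that the double-exponential gap in complexity is preserved, but that is routine given that the gadget adds only a constant number of variables and constraints with coefficients of size comparable to the \SSI data.
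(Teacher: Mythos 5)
Your overall instinct is right on one point: the construction from \cref{thm:Hardness} cannot be reused as-is, because there every leader's feasible set is bounded, so by \cref{thm:BndObv} an \MNE exists even for NO instances; some unboundedness must be injected. But the mechanism you propose to exploit it --- a strictly convex term in a leader's own decision so that ``every \MNE collapses to a \PNE'' --- has a genuine gap, in fact two. First, it is not implementable inside the model class of the theorem. In a trivial \NASP the followers solve \emph{linear} programs, so your quadratic-follower gadget already leaves the class whose hardness is being claimed; and even in a general \NASP, a leader's objective is linear in its own variables (plus bilinear coupling with the opponent, which is still linear in own variables once the opponent's mixed strategy is fixed), and the bilevel structure only produces a feasible region that is a finite union of polyhedra (\cref{thm:BasuSri}) with a piecewise-affine optimistic follower response. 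The leader's effective payoff is therefore piecewise linear in its own decision; no choice of follower cost can make it strictly convex. Second, even granting a strictly convex term in one auxiliary coordinate $z$, this only pins down the $z$-marginal: with a linear payoff over the remaining (bounded, nonconvex) coordinates, any mixture of pure best responses is again a best response, so the player can still mix over the coordinates that encode the \SSI guess. Your own ``delicate step'' for the Greek leader is exactly where the argument breaks, and neither remedy you sketch works: symmetrizing the gadget does not constrain mixing in the original coordinates, and the bilinear coupling cannot be ``strictly convex on the affine hull of two strategies'' since it is linear in own variables in expectation.

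The paper avoids purification altogether. It builds a \emph{new} reduction (\cref{eq:MixHard}) in which the Greek leader has an unbounded variable $\xi_0$ with payoff $(1-x_0)\xi_0$, while a union-of-two-polyhedra gadget (\cref{lem:union,lem:BilCart}) lets the Latin leader's linear objective emulate the quadratic matching term $2\xi_{r+1}x_{k+3r+1}-x_{k+3r+1}^2$ on its feasible set, forcing $x_{k+3r+1}=\xi_{r+1}$ and tying $x_0$ to the \SSI structure through a knapsack constraint. In a YES instance the knapsack cannot be filled exactly, so the Latin leader takes the extra $1/2$ by setting $x_0=1$, which zeroes the Greek objective and yields a \PNE (hence an \MNE); in a NO instance the Latin leader always fills the knapsack exactly and sets $x_0=0$, whereupon the Greek best-response problem is unbounded in $\xi_0$ and \emph{no} equilibrium, pure or mixed, exists. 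So nonexistence of \MNEs comes from an unbounded best-response problem, not from showing that mixed equilibria must be pure; your proposal would need to be redesigned along these lines to close the NO case.
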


\begin{proof}[Proof of \cref{thm:Hardness}.]
\citet[Theorem 2.1]{Carvalho2018Complexity} showed that deciding if a \PNE exists is \SigmaTwoP in two-player simultaneous games where (i.) each player solves a binary integer program with bounded variables, and (ii.) $f^i$ is linear in $x^i$ and parametrized in $x^{-i}$. We employ this result by formulating the Stackelberg game associated with each player as a parametrized integer program.
First, we can employ the reformulation proposed in \citet{Basu2018a}; it requires that the parametrized integer program has a bounded feasible region and is thus compatible with (i.).
Second, we can employ the reformulation proposed in \citet{audet1997links}; it requires that all integer variables are binary, and thus it is also compatible with (i.). Since the objective of each player $i$ is linear in $x^i$, the result of \citet{Carvalho2018Complexity} extends to trivial \NASPs.
\end{proof}

Under an assumption of boundedness, we also show that a trivial \NASP always admits an \MNE. 
\begin{proof}[Proof of \cref{thm:BndObv}.]
Let $\mathscr{F}^i$ be bounded for each player $i$. If $f^i$ is linear in $x^i$, there always exists an optimal solution that is an extreme point of $\conv(\mathscr{F}^i)$. Since the feasible set $\mathscr{F}^i$ of each simple Stackelberg game is a union of polyhedra \citep{Basu2018a},  $\conv(\mathscr{F}^i)$ is a polyhedron. Furthermore, since $\mathscr{F}^i$ is bounded, $\conv(\mathscr{F}^i)$ is precisely a polytope, and the strategies of player $i$ are the finitely many extreme points of $\conv(\mathscr{F}^i)$. Since each player has finitely many strategies, the game is finite, and it always admits an \MNE \citep{Nash1950}.
\end{proof}

We complement the previous result with \cref{thm:MixHard}, where we show that if at least one player's feasible region is not bounded, deciding on the existence of an \MNE is $\Sigma^p_2$-hard.
In contrast to the proof of \cref{thm:Hardness}, we could not apply any of the result of \citet{Carvalho2020computing,Carvalho2018Complexity} since the reformulated players' integer programs: (i.) are not bounded, preventing us from applying the reduction of \citet{Basu2018a},  and (ii.) employ general integer variables as opposed to binary variables, preventing us from using the reduction of \citet{audet1997links}. 
Based on the one of \citet{Carvalho2020computing,Carvalho2018Complexity}, we provide a novel proof that reduces the problem of computing an \MNE in a \NASP to the Subset Sum Interval (\SSI) problem.
While the full proof and its associated claims are available in the electronic companion, we provide a proof sketch below.

\begin{definition}[\SSI]\label{Def:SubSum}
	Given $q_1,\ldots,q_k, p, t, k \in \Z_+$ and $\log_2 (t-p)\leq k$, does there exist an integer $s \in \Z$ with $p \leq s < t $, such that, for all $I \subseteq \{ 1,2,\dots,k \}$, then $\sum_{i\in I}q_i \neq s$?
\end{definition}
The term $t-p$ can be a power of $2$, and the \SSI would then asks if there exists an $r\in\Z_+$ such that $2^r = t- p$. \citet{eggermont2013} proved that, given $r \in \Z_+$ such that $t-p = 2^r$, the problem is $\Sigma^p_2$-complete.

\begin{proof}[Proof Sketch of \cref{thm:MixHard}.]

We reduce \SSI to the problem of deciding on the existence of an \MNE for a trivial \NASP. Let $Q = \sum_{i=1}^k q_i$. 
We explicitly formulate the Stackelberg games associated with each player of a trivial \NASP and denote them as the Latin and the Greek game, respectively. Let $w$ be the Latin leader's decision variables, and let $y$ be its follower variables. Similarly, let $\xi$ be the Greek's leader variables and $\chi$ its follower ones. The Latin player solves the problem 
\begin{subequations}
	\begin{align}
		\max_{\substack{
		w_0,\dots,w_{k+3r+1}
		\in \R\\
		y_0,\dots,y_k
		\in \R
		}} \quad\quad                      &                          
		\frac{w_0}{2} + \sum_{i=1}^k q_iw_i  + 2(Q+1)\xi_{r+1}w_{k+3r+1}
		\nonumber \\
		                                    & \qquad                   
		- (Q+1)\left( 
		\sum_{i=1}^r 2^{i-1} w_{k+i} + pw_{k+3r+1}
		\right)\\
		\text{s.t.}  \quad             
		w_i 
		\quad                               & \geq \quad               
		0, \quad y_i  \quad \geq\quad 0, \quad w_i \quad \leq\quad 1
		\label{eq:MLat:xpos}
		                                    & \forall\, i = 0,\dots,k, \\
		w_{k+3r+1}                          
		\quad                               & =\quad                   
		w_{k+2r+i}  \label{eq:MLat:xeq}
		                                    & \forall\, i=1,\dots,r,   \\
		w_{k+3r+1}                          
		\quad                               & =\quad                   
		p + \sum_{i=1}^r 2^{i-1} w_{k+r+i}, & \label{eq:MLat:xbin}     
		\\
		\frac{w_0}{2} + \sum_{i=1}^k q_iw_i 
		\quad                               & \leq\quad                
		w_{k+3r+1},                         & \label{eq:MLat:knap}     
		\\
		(w_{k+i}, w_{k+r+i}, w_{k+2r+i})    
		\quad                               & \in\quad                 
		\Big \{ (h,y,w)\in\R^3_+:  \left\{h=w,  y =1 \right\} \cup \left\{ h=y=0 \right\} 
		\Big \}
		\label{eq:MLat:S}
		                                    & \forall\, i=1,\dots,r,   \\
		\left( y_0,\dots,y_k \right)        
		\quad                               & \in\quad                 
		\arg\min_{y} \left\{ 
		\sum_{i=0}^k y_i: \begin{array}{l}
		y_i \geq -w_i \\ y_i \geq w_i-1
		\end{array}\forall\, i=0,\dots, k
		\right\}. \label{eq:MLat:bil}
	\end{align}
	The Greek player solves the problem
	\begin{align}
		\max_{\substack{
		\xi_0,\dots,\xi_{r+1}%
		\in \R\\
		\chi_1,\dots,\chi_r%
		\in \R
		}} \quad\quad & (1-w_0)\xi_0                                                                    \\
		\text{s.t.}  \quad
		\xi_i \quad    & \geq\quad 0, \quad     \chi_i \quad \geq\quad 0, \quad \xi_i \quad  \leq\quad 1 
		               & \qquad \forall \, i=1,\dots,r,\label{eq:Gr:chipos}                                     
		\\
		p + \sum_{i=1}^r 2^{i-1}\xi_i              
		\quad          & =\quad                                                                          
		\xi_{r+1},\label{eq:Gr:xibin} \\
		\left( \chi_1,\dots,\chi_r \right)         
		\quad          & \in\quad                                                                        
		\arg\min_{\chi} \left\{ 
		\sum_{i=1}^r \chi_i: \begin{array}{l}
		\chi_i \geq -\xi_i \\ \chi_i \geq \xi_i-1
		\end{array}\forall\, i=0,\dots, r
		\right\}.\label{eq:Gr:bil}
	\end{align}
	\label{eq:MixHard}
\end{subequations}

We show that the \NASP in \cref{eq:MixHard} has an \MNE if and only if the corresponding \SSI problem has an answer YES. 
First, we prove that the formulation ensures that $w_0,\ldots,w_k$ and $\xi_0,\ldots,\xi_r$ are binary (\cref{cl:MNE:Disj}). 
Considering the Greek player, if $w_0=1$, no matter what the Greek player does, its objective function value necessarily equals $0$, the smallest value it can take. 
We construct an \MNE  in this case. 
If $w_0 = 0$, then $\xi_0$ can be arbitrarily large, and there always exists a profitable unilateral deviation for the Greek player. 
Therefore, the problem of checking the existence of an \MNE collapses to the problem of determining whether the Greek player can induce the Latin player to choose $w_0=0$ or not. 
By observing the last two terms in the Latin leader's objective function, we prove that the Latin player plays optimally by always choosing $w_{k+3r+1} = \xi_{r+1}$. Indeed, if $w_{k+3r+1} \neq \xi_{r+1}$, the Latin player acts suboptimally (\cref{cl:MNE:LatCons}).

On the one hand, if the \SSI instance is a \emph{YES} instance (i.e., it admits a solution), then let $\xi_{r+1} = s$.
Since $s$ cannot be represented as a subset-sum, the Latin player can never ensure that \cref{eq:MLat:knap} is satisfied with equality. Even if $w_0=1$, it only adds $0.5$ to the LHS of \cref{eq:MLat:knap}. However, the Latin player would be better off by selecting $w_0 =1 $, as $w_0$ appears in the Latin leader's objective and can improve the objective function's value by $0.5$.

On the other hand, if the \SSI instance is a $NO$ instance, no matter what value the Greek player assigns to $\xi_{r+1}=w_{k+3r+1}$, the Latin player will always ensure that \cref{eq:MLat:knap} is satisfied with equality by choosing $w_0 = 0$. 
Choosing $w_0=1$ is not optimal for the Latin player since it would improve the objective only by $0.5$ while enforcing $w_0 = 0$ would improve the objective at least by $1$. Therefore, enforcing $w_0=0$ is optimal for the Latin player, and the Greek player will never have a profitable deviation from any feasible strategy.
\end{proof}

\section{The Enumeration Algorithm} 
\label{sec:Algo}

Although we proved that deciding on the existence of Nash equilibria is \SigmaTwoP, in this section, we present an effective and exact algorithm to compute \MNEs.
Our algorithmic scheme exploits the structure of the (non-convex) feasible region $\mathscr{F}^i= \{x^i=(w^i,y^i): x^i \in \mathcal{S}^i,\; y^i \in \sol(P(w^i))\}$ of each player $i$. While $\mathcal{S}^i$ is a polyhedron, $\sol(P(w^i))$ is the parametrized set of solutions for the followers' game. We also recall that $\sol(P(w^i))$ is the set of points satisfying all the $KKT$ conditions associated with the followers' optimization problems. Assuming such $KKT$ conditions are expressed as complementarity conditions, then
\begin{align}
	\mathscr{F}^i= \{x^i=(w^i,y^i) : x^i \in \mathcal{S}^i, \; z^i=M^ix^i+q^i, \; 0\leq x^i_j \perp z^i_j \ge 0 \;\;\;  \forall\,j\in\mathcal{C}^i   \} . 
	\label{eq:BilevelSet}   
\end{align}
In the above reformulation, we let $\mathcal{C}^i$ be the set of indices for the complementarity conditions $0\leq x^i_j \perp z^i_j \ge 0$ describing $\sol(P(w^i))$, and $q^i$ and $M^i$ be a vector and a matrix of appropriate dimensions, respectively. The set $\mathscr{F}^i$ also has an equivalent representation as the union of finitely many polyhedra.
Let $\mathcal{O}^i=\{x^i : x^i \in \mathcal{S}^i, \; z^i=M^ix^i+q^i, \; x^i_j \ge 0, \; z^i_j \ge 0 \;\;\;  \forall\,j\in\mathcal{C}^i   \}$ be the \emph{polyhedral relaxation} of $\mathscr{F}^i$. For every player $i$, the finitely many polyhedra whose union is $\mathscr{F}^i$ are given by the \emph{complementarity polyhedra} of \cref{Def:CompPoly}.

\begin{definition}[Complementarity Polyhedron]
\label{Def:CompPoly}
	Given a player $i$ and a binary vector $\theta^i \in \{0,1\}^{|\mathcal{C}^i|}$, the {\em complementarity polyhedron} corresponding to $\theta^i$ is 
	$$\mathscr{P}^i(\theta^i)=
	\{ (x^i,z^i) \in \mathcal{O}^i : x^i_{j} \leq 0 \;\; \forall \; j : \theta^i_j = 0 \} \cap
	\{ (x^i,z^i) \in \mathcal{O}^i : z^i_j \leq 0 \;\; \forall \; j :\theta^i_j = 1 \}.
	$$
\end{definition}

In other words, for any player $i$, $\theta^i$ identifies whether the $j$-th complementarity conditions is active ($x^i_{j} \leq 0$) or not ($z^i_j \leq 0$) in the polyhedron $\mathscr{P}^i(\theta^i)$. Therefore, $\mathscr{F}^i$ is the finite union of \emph{all} the complementarity polyhedra, i.e., 
$\mathscr{F}^i = \bigcup_{ \theta^i \in \{0,1\}^{|\mathcal{C}^i|}} \mathscr{P}^i(\theta^i)
$. Let $k^i$ be the number of the non-empty complementarity polyhedra associated with player $i$.
The core idea behind our algorithm is to reformulate the \NASP by letting each player select its strategies from $\cl \conv(\mathscr{F}^i)$, i.e., the closure of the convex hull of $\mathscr{F}^i$, instead of $\mathscr{F}^i$.  In other words, we \emph{convexify} the \NASP by letting players select their strategies from $\cl \conv(\mathscr{F}^i)$. We prove that any \PNE of the convexified \NASP maps to an equivalent \MNE in the original non-convex \NASP.
In practice, this idea requires, for each player $i$, the explicit enumeration of the $k^i$ complementarity polyhedra and the computation of the closure of their convex hull.
We will employ the result of \citet{Balasa} to compute the latter.

\begin{theorem}[\citet{Balasa}]\label{thm:Balas}
	Given $k$ polyhedra $K_j = \lbrace x\in\R^n : A_jx \leq b_j \rbrace$ for $j=1,\dots k$, then $\cl \conv( \bigcup_{j=1}^k K_j)=
	\{ x \in \R^n: \exists (x_1,\ldots, x_k,\delta) \in (\R^n)^k \times \R^k : 
	A_jx_j \leq \delta_j b_j, \;
	\sum_{w=1}^kx_w =x, \;
	\sum_{w=1}^k \delta_w =1, \;
	\delta_j \geq 0,\;
	\forall j =1,\dots,k
	\} 
	$.
\end{theorem}

The last ingredient of our algorithm is the result of \citet[Theorem 2.7]{stein_separable_2008} for \emph{separable} simultaneous games, i.e., simultaneous games where $f^i(x^i,x^{-i})$ is a separable polynomial for every player $i$. In a separable simultaneous game, \citet{stein_separable_2008} proved that every mixed strategy $\sigma^i$ of player $i$ has either finite support or a finite support equivalent. Since \NASPs are indeed separable simultaneous games, we restrict our attention to \MNEs where the players randomize over a finite number (precisely, at most $k^i$)  of pure strategies.

\paragraph{The Enumeration Algorithm. }
We formalize our enumeration scheme in \cref{Alg:FullEnumeration}. 
Given a \NASP instance $N=(S^1,\dots,S^n)$, the algorithm returns either an \MNE for $N$ or a proof of its non-existence, i.e., a \texttt{no} certificate. \cref{Alg:FullEnumeration} exploits the equivalent convex representation of $N$, where each player's feasible region $\mathscr{F}^i$ is replaced by $\tilde{\mathscr F}^i:=\cl \conv(\mathscr F^i)$. In \cref{Alg:FullEnumeration:hull}, the algorithm retrieves the sets $\tilde{\mathscr F}^i$ by enumerating the $k^i$ complementarity polyhedra and computing the closure of their convex hull through the extended formulation of \cref{thm:Balas}.
In \cref{Alg:FullEnumeration:fac}, the algorithm reformulates $N$ into the equivalent simultaneous game $\tilde N$, where each player $i$ solves $\tilde{S}^i := \min_{x^i\in\R^{n_\ell+n_f}} \lbrace f^i(x^i,x^{-i})=(c^i)^\top x^i + (C^ix^{-i})^\top x^i : x^i \in \tilde{\mathscr{F}}^i \rbrace$. At this step, $\tilde N$ is a facile simultaneous game, and we can formulate an \LCP to determine its \PNEs. 
\begin{algorithm}[!ht]
    \DontPrintSemicolon
	\caption{Enumeration Algorithm for \NASPs \label{Alg:FullEnumeration}}
	\KwData{A \NASP instance $N = (S^1,\ldots,S^n)$.}
	\KwResult{ Either: (i.)
		$(\hat x^i,p^i)$ for every $i$, or (ii.) \texttt{no}.}
	\For{$i\gets1$ \KwTo $n$}{	
		$\tilde{\mathscr F}^i \gets \cl \conv (\mathscr{F}^i) $ by applying \cref{thm:Balas}\label{Alg:FullEnumeration:hull}\;
		$\tilde{S}^i \gets \min_{x^i\in\R^{n_\ell+n_f}} \lbrace f^i(x^i,x^{-i})=(c^i)^\top x^i + (C^ix^{-i})^\top x^i : x^i \in \tilde{\mathscr{F}}^i \rbrace$\;
	}
	\If{$\exists$ a \PNE $(\tilde x^1,\ldots,\tilde x^n)$ of $\tilde N = (\tilde S^1,...,\tilde S^n)$ \label{Alg:FullEnumeration:fac}} 
	{
		\For{$i\gets1$ \KwTo $n$}{	
			\lIf{$\tilde x^i \in \mathscr{F}^i$} {$\exists j :\hat x^i_j \gets \tilde x^i$, $p^i_j \gets 1$ \label{Alg:FullEnumeration:pure}}
			\lElse{
				$\tilde x^i = \sum_{j=1}^{k^i} p^i_j \hat x^i_j$ for $\hat x^i_1,\dots, \hat x^i_{k^i} \in \mathscr{F}^i$ with $p^i_j\geq 0$ and $\sum_{j=1}^{k^i}p^i_j=1$\label{Alg:FullEnumeration:mix}
			}
		}
		\KwRet{($\hat x^i_j$, $p^i_j$) for each player $i=1,\dots,n$ and $j = 1,...,k^i$}\;
	} 
	\lElse{\KwRet{\texttt{no}} \label{Alg:FullEnumeration:no}}
\end{algorithm}
We claim that any \PNE of $\tilde N$ maps to an equivalent \MNE in $N$.
Let $\tilde x=(\tilde{x}^1,\dots,\tilde{x}^n)$ be a \PNE of $\tilde N$. If $\tilde{x}^i \in \mathscr{F}^i$ for each player $i$, then $i$ plays the pure strategy $\tilde{x}^i$ in the \MNE of $N$; therefore, $\tilde{x}$ is necessarily a \PNE for $N$ (\cref{Alg:FullEnumeration:pure}). Otherwise, if there exists a player $i$ such that $\tilde{x}^i \notin \mathscr{F}^i$, then $\tilde{x}^i \in \cl \conv (\mathscr{F}^i) \backslash \mathscr{F}^i$ and we claim that $\tilde{x}^i$ is equivalent to a mixed strategy (\cref{Alg:FullEnumeration:mix}). 
Indeed, $\tilde{x}^i$ is either a convex combination of points in $\mathscr{F}^i$ or a limit of such points. Specifically, the points in the support of the convex combination belong to $\mathscr{F}^i$, i.e., they are pure strategies; the convex combination's coefficients $p^i$ are the probabilities associated with each point in the support. 
In practice, the probabilities $p^i$ are precisely the values of the $\delta$ variables provided by \cref{thm:Balas}.
We provide a visualization of this convexification method in \cref{fig:Enumeration}. Finally, we prove the properties of \cref{Alg:FullEnumeration} in \cref{thm:Alg}.

\begin{figure}[ht]
	\captionsetup{justification=centering}
			    
	\centering
			        
	\begin{subfigure}[b]{0.45\textwidth}
		\centering
		\includegraphics[width=\textwidth]{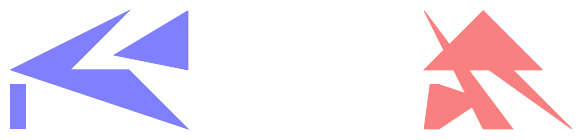}
		\caption{\cref{Alg:FullEnumeration:hull} of \cref{Alg:FullEnumeration}: each player's $i$ feasible region $\mathscr{F}^i$ is a non-convex union of finitely many polyhedra. }
		\label{fig:feas}
	\end{subfigure}
	\hfill
	\begin{subfigure}[b]{0.45\textwidth}
		\centering
		\includegraphics[width=\textwidth]{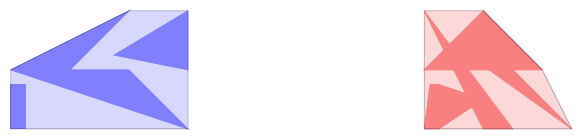}
		\caption{\cref{Alg:FullEnumeration:hull} of \cref{Alg:FullEnumeration}: for each $i$, the algorithm computes $\cl \conv(\mathscr{F}^i)$ via \cref{thm:Balas}. }
		\label{fig:hull}
	\end{subfigure}
	\bigskip 
			        
	\begin{subfigure}[b]{0.45\textwidth}
		\centering
		\includegraphics[width=\textwidth]{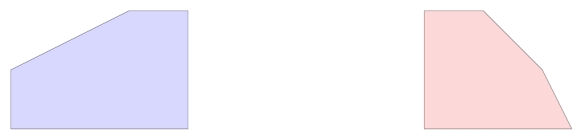}
		\caption{\cref{Alg:FullEnumeration:fac} of \cref{Alg:FullEnumeration}: each player solves a parametrized convex problem over $\cl \conv(\mathscr{F}^i)$.    %
		}
		\label{fig:game}
	\end{subfigure}
	\hfill 
	\begin{subfigure}[b]{0.45\textwidth}
		\centering
		\includegraphics[width=\textwidth]{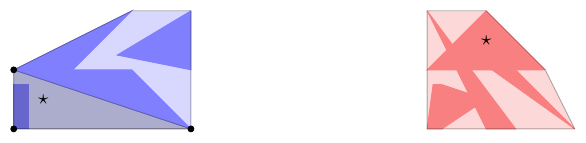}
		\caption{\cref{Alg:FullEnumeration:pure,Alg:FullEnumeration:mix} of \cref{Alg:FullEnumeration}: the \PNE $\star$ of $\tilde N$ is an \MNE for $N$, since each $\tilde{x}^i$ is a convex combination of strategies in $\mathscr{F}^i$. }
		\label{fig:hullGame}
	\end{subfigure}
	\centering
	\caption{A representation of \cref{Alg:FullEnumeration} for two players (blue and red).} 
	\label{fig:Enumeration} 
\end{figure}

\begin{theorem} \label{thm:Alg}
	\cref{Alg:FullEnumeration} terminates in a finite number of steps and returns either
	\begin{enumerate*}[label=(\roman*)]
		\item an \MNE for $N$ so that each player $i$ plays the strategy $\hat x^i_j$ with probability $p^i_j$, or
		\item \texttt{no} if $N$ has no \MNE.
	\end{enumerate*}
\end{theorem}
\begin{proof}[Proof of \cref{thm:Alg}.]
The algorithm terminates in a finite number of steps since all the \texttt{for} loops have $n$ iterations, $k^i \le 2^{\mathcal{C}^i}$ for any $i$, and \cref{Alg:FullEnumeration:fac} is an \LCP.

\paragraph{Proof of Statement (i.). } 
\begin{subequations}
	If \cref{Alg:FullEnumeration} returns $(\hat x^i,p^i)$, then  \cref{Alg:FullEnumeration:fac} finds a \PNE $\tilde x$ for $\tilde N$. 
	Since each player's objective function $f^i$ is separable and linear in $x^i$,  $\tilde x^i$ is a distribution with finite support over $\mathscr{F}^i$ \citep{stein_separable_2008}.
	Therefore, for every player $i$,
	\begin{align}
		\mathbb{E}_{\hat{X}\sim p}\left[ \left( c^i+C^i\hat X^{-i} \right)^\top \hat X^i \right] 
		\quad=\quad                                                                              
		\sum_{j'}\sum_{j=1}^{k_i}                                                                
		p^{-i}_{j'} p^i_j                                                                        
		\left(                                                                                   
		c^i + C^i\hat x^{-i}_{j'}                                                                
		\right)^\top                                                                             
		\hat x^i_j                                                                               
		\quad=\quad                              
		\left(   c^i + C^i\tilde x^{-i} \right)^\top \tilde x^i.                                 
		\label{eq:Alg1}                                                                          
	\end{align}
	Assume $(\hat x^i,p^i)$ is not an \MNE for $N$, or, equivalently, there exists player $i$ that can profitably and unilaterally deviate from $(\hat x^i,p^i)$  to $(\dagger \hat x^i,\dagger p^i)$. By definition, $\sum _{j=1}^{k^i}(\dagger p^i_j\dagger \hat x^i_j)$ is a pure strategy for $i$ in $\tilde N$. By leveraging the linearity of $f^i$ in $x^i$ for any player $i$, we can show that $\sum _{j=1}^{k^i}(\dagger p^i_j\dagger \hat x^i_j)$ is also a profitable deviation for $\tilde P^i$ in $\tilde N$ since 
	\begin{align}
		\left(   c^i + C^i\tilde x^{-i} \right)^\top \tilde x^i
		  & \quad= 	\quad                                            %
		\sum_{j'}\sum_{j=1}^{k_i} p^{-i}_{j'}p^i_j \left(   c^i + C^i\hat x^{-i}_{j'} \right)^\top \left( \hat x^i_j \right) & \nonumber
		\\ & \quad \geq 	\quad                              
		\sum_{j'}\sum_{j=1}^{k^i}p^{-i}_{j'} \dagger p^i_j \left(   c^i + C^i\tilde x^{-i}_{j'} \right)^\top \left( \dagger \hat x^i_j \right) \nonumber \\
		  & \quad=\quad  \left(   c^i + C^i\left( \sum_{j'}p^{-i}_{j'}\tilde x^{-i}_{j'} \right) \right)^\top \left( \sum_{j=1}^{k^i}\dagger p^i_j  \dagger \hat x^i_j\right) 
		 \nonumber \\     
		&\quad=\quad \left(   c^i + C^i \tilde x^{-i} \right)^\top \left( \sum_{j=1}^{k^i}\dagger p^i_j  \dagger \hat x^i_j\right). \nonumber
		\label{eq:Alg1Plug}
	\end{align}
	This last equation proves that there also exists a profitable deviation from $(\hat x^i,p^i)$ in $\tilde N$, contradicting the fact $(\hat x^i,p^i)$  is a \PNE in $\tilde N$.

	\paragraph{Proof of Statement (ii.). }
	To prove this statement, we prove its contrapositive; namely, we show that if $N$ has an \MNE, then \cref{Alg:FullEnumeration:fac} obtains a \PNE for $\tilde N$.
	Let the \MNE of $N$ be $(\hat x^i,p^i)$, and
	let $\tilde x^i = \sum_{j=1}^{k_i} p_j^i \hat{x}^i_j$. Since $(\tilde x^1,\dots,\tilde x^n)$ is both feasible in $\tilde N$ and an \MNE in $N$, then
	$$\sum_{j'}\sum_{j=1}^{k^i} p^{-i}_{j'} p^i_j ( C^i\hat{x}^{-i}_{j'} + c^i)^\top\hat{x}^i_j \quad \leq \quad \sum_{j'} p^{-i}_{j'}  ( C^i\hat{x}^{-i}_{j'} + c^i )^\top \bar x^i \qquad \forall\,  \bar x^i \in \mathscr F_i .$$
	By leveraging the linearity of $f^i$ in $x^i$, we have that $\left( C^i\tilde x^{-i} + c^i \right)^\top\tilde x^i \leq \left( C^i \tilde x^{-i} + c^i \right)^\top \bar x^i$ for any $\bar x^i \in \mathscr F_i$.
	If the previous inequality holds for any $\bar x^i\in \cl\conv(\mathscr F_i)$ and for any player $i$, then $\tilde x$ is also a \PNE of $\tilde N$. 
	First, the inequality holds for any $\bar x^i \in \conv(\mathscr F_i)$.
	Let $\bar x^i = \sum_{j=1}^{k^i} \lambda_j \bar x^i_j$, where $\bar x^i_j \in \mathscr {F}_i$ and $\lambda_j \geq 0$ and $\sum_{j=1}^{k^i} \lambda_j=1$, i.e., $\lambda$ are the coefficients of a convex combination. Consider the $k^i$ inequalities associated with each strategy $\bar x^i_j$ in the support of $\bar{x}^i$. By multiplying such inequalities by the associated non-negative $\lambda_j$ on both sides, we obtain that
	\begin{align}
		\left( C^i\tilde x^{-i} + c^i \right)^\top\tilde x^i
		\quad & \leq\quad 
		\sum_{j=1}^{k^i} \lambda_j \left( C^i \tilde x^{-i} + c^i \right)^\top \bar x^i_j 
		\quad=\quad
		\left( C^i \tilde x^{-i} + c^i \right)^\top \bar x^i.
	\end{align}
	Therefore, the inequality holds for any $\bar x^i \in \conv(\mathscr F_i)$. To extend this result to the closure, i.e., for any $\bar x^i \in \cl \conv (\mathscr F_i)$, we consider a convergent sequence $\bar x^i_1, \bar x^i_2,\dots$ such that $\bar x^i_j \in \conv (\mathscr F_i)$ and $\lim_{j\to\infty} \bar x^i_j = \bar x^i$. For any $j=1,\dots,k^i$ and any $i$
	\begin{align}
		\left( C^i\tilde x^{-i} + c^i \right)^\top\tilde x^i
		\quad                                                  & \leq\quad 
		\left( C^i \tilde x^{-i} + c^i \right)^\top \bar x^i_j &           
		 \nonumber
		\\
		\implies
		\lim_{j\to\infty}
		\left( C^i\tilde x^{-i} + c^i \right)^\top\tilde x^i
		\quad                                                  & \leq\quad 
		\lim_{j\to\infty}
		\left( C^i \tilde x^{-i} + c^i \right)^\top \bar x^i_j & \nonumber 
		\\ 
		\implies
		\left( C^i\tilde x^{-i} + c^i \right)^\top\tilde x^i
		\quad                                                  & \leq\quad 
		\left( C^i \tilde x^{-i} + c^i \right)^\top \left( \lim_{j\to\infty} \bar x^i_j \right)
		\quad  = \quad \left( C^i \tilde x^{-i} + c^i \right)^\top \bar x^i. \nonumber 
	\end{align}
	Thus, $\left( C^i\tilde x^{-i} + c^i \right)^\top\tilde x^i \leq \left( C^i \tilde x^{-i} + c^i \right)^\top \bar x^i$ holds for any $\bar x^i \in \cl\conv(\mathscr F_i)$, and $\tilde x$ is a \PNE of $\tilde N$. 
\end{subequations}
\end{proof}

\begin{remark}\label{rem:convHullisEnuf}
	In the proof of \cref{thm:Alg}, we only assume that any $f^i$ is linear in $x^i$. Whenever this assumption holds, any \NASP admits an \emph{equivalent convex representation} $\tilde N$ so that any \PNE in $\tilde N$ is an \MNE in $N$. Furthermore, we can solve the \LCP in \cref{Alg:FullEnumeration:fac} of \cref{Alg:FullEnumeration} as a Mixed-Integer Program (\MIP), and select an equilibrium optimizing a given objective function. 
\end{remark}

\section{Algorithmic Enhancements}
\label{sec:enhance}
We propose two enhancements to \cref{Alg:FullEnumeration}. First, in \cref{sec:sub:inner}, we introduce an algorithm that iteratively refines a convex approximation of each $\tilde{\mathscr F}^i$.  Second, in \cref{sec:sub:PNE}, we tailor \cref{Alg:FullEnumeration} to specifically retrieve \PNEs, instead of general \MNEs.

\subsection{Inner Approximation Algorithm}
\label{sec:sub:inner}

\cref{Alg:FullEnumeration} necessarily converges to an \MNE or a proof of its non-existence. 
However, from a practical perspective, $k^i$ may be exponential in ${\mathcal{C}^i}$ for any $i$, i.e., $k^i 
=2^{\mathcal{C}^i}$. Thus, computing $\tilde{\mathscr F}^i$ may be prohibitively difficult in practice, as it requires the enumeration of an exponential number of complementarity polyhedra for each player. %
Motivated by this practical issue, we introduce an \emph{inner approximation} algorithm inspired by \cref{Alg:FullEnumeration}. We aim to possibly avoid the expensive enumeration of the complementarity polyhedra by iteratively refining a convex inner approximation of $\tilde{\mathscr F}^i$ for every player $i$. Instead of computing $\tilde{\mathscr F}^i$, i.e., the closure of the convex hull of the union of \emph{all} the complementarity polyhedra for $i$, this algorithm approximates $\tilde{\mathscr F}^i$ by considering the closure of the convex hull of the union of \emph{some} complementarity polyhedra. Let $\hat{\mathscr F}^i \supseteq \tilde{\mathscr F}^i$ be the inner approximation of $\tilde{\mathscr F}^i$.  Starting from $\hat{\mathscr F}^i= \mathcal{O}^i$, we refine $\hat{\mathscr F}^i$ by computing the closure of the convex hull of $\hat{\mathscr F}^i \cup \mathscr{P}^i(\theta^i)$, i.e., the union of $\hat{\mathscr F}^i$ and an additional complementarity polyhedron. This inner approximation scheme enables to iteratively grow the description of $\hat{\mathscr F}^i$, even when considering the union of $\hat{\mathscr F}^i$ with several complementarity polyhedra.
Let $\mathcal{J}^i \subseteq \{0,1\}^{|\mathcal{C}^i|}$ be a \emph{set of complementarity polyhedra} for $i$ so that, for any $\theta^i \in \mathcal{J}^i$, $\mathscr{P}^i(\theta^i)$ is a complementarity polyhedron.

\begin{definition}[Inner Approximation]\label{def:innerapprox}
	The \emph{inner approximation} for $i$ induced by the set of complementarity polyhedra $\mathcal{J}^i$ is
	$
	\mathcal I^i_{\mathcal{J}^i}= \cl\conv \left ( 
	\bigcup_{\theta^i\in \mathcal{J}^i} (\mathscr{P}^i(\theta^i))
	\right ) 
	$.
\end{definition}

For any choice of $\mathcal{J}^i$ and player $i$, we remark that $\mathcal I^i_{\mathcal{J}^i}$ is a polyhedron and, thus, convex.

\begin{algorithm}
	\caption{Inner Approximation Algorithm for \NASPs \label{Alg:InnerApproximation}}
	\KwData{A \NASP instance $N = (S^1,\ldots,S^n)$,  $\mathcal{J}=(\mathcal{J}^1,...,\mathcal{J}^n)$ with $\mathcal{J}^i\subseteq \{0, 1\}^{|\mathcal \mathcal{C}^i|}$.}
	\KwResult{ Either: (i.)
		$(\hat x^i,p^i)$ for every $i$, or (ii.) \texttt{no}.}

	\DontPrintSemicolon
	\SetKwFunction{FMain}{InnerApproximation}
	\SetKwProg{Fn}{Function}{:}{}
	\Fn{\FMain{$N, \mathcal{J}$}}{
		\For{$i\gets1$ \KwTo $n$}{	
			$\hat{ \mathscr{F}}^i \gets \mathcal I^i_{\mathcal{J}^i}$,
			$\tilde{S}^i \gets \min_{x^i\in\R^{n_\ell + n_f}} \lbrace f^i(x^i,x^{-i})=(c^i)^\top x^i + (C^ix^{-i})^\top x^i : x^i \in \hat{\mathscr{F}}^i \rbrace$ \; \label{Alg:InnerApproximation:approx}
		}
		\If{$\exists$ a \PNE $(\tilde x^1,\ldots,\tilde x^n)$ of $\tilde N = (\tilde S^1,...,\tilde S^n)$ \label{Alg:InnerApproximation:tildeN}}
		{
			\If{\texttt{getDeviation}$(S^i, \tilde x)=\emptyset$ for any player $i$ \label{Alg:InnerApproximation:deviation}} {
				$\tilde x^i = \sum_{j=1}^{k^i} p^i_j \hat x^i_j$ for $\hat x^i_1,\dots, \hat x^i_{k^i} \in \mathscr{F}^i$ with $p^i_j\geq 0$ and $\sum_{j=1}^{k^i}p^i_j=1$\;
				\KwRet{($\hat x^i_j$, $p^i_j$) for each player $i=1,\dots,n$ and $j = 1,...,k^i$\label{Alg:InnerApproximation:yes}\; } 
			} 
			\ElseIf{\texttt{getDeviation}$(S^i, \tilde x)=\hat{x}^i$ for some player $i$}{
				$\tilde \theta^i \gets $ the complementarity polyhedron so that $\hat x^i \in \mathscr{P}^i(\theta^i)$, 
				$\mathcal{J}^i \gets \mathcal{J}^i \cup \{\tilde \theta^i\}$\label{Alg:InnerApproximation:encoding}\;
				\KwRet{\texttt{InnerApproximation}$(N,\mathcal{J})$\; \label{Alg:InnerApproximation:rec1}} 
			}
		}
		\Else{
			\If{ $\exists \; \tilde \theta^i : \tilde \theta^i \notin \mathcal{J}^i$ \label{Alg:InnerApproximation:extension}}{
				$\mathcal{J}^i \gets \mathcal{J}^i \cup \{\tilde \theta^i\}$ and \textbf{return} \texttt{InnerApproximation}$(N,\mathcal{J})$\label{Alg:InnerApproximation:failStep}\;
			}
			\lElse{
				\KwRet{\texttt{no} } 
			}
		}
	}
\end{algorithm}

\paragraph{The Algorithm. }In \cref{Alg:InnerApproximation}, we present the inner approximation algorithm $\texttt{InnerApproximation}(N,\mathcal{J})$ to compute an \MNE for a \NASP $N$. 
Let $\mathcal{J}=(\mathcal{J}^1,...,\mathcal{J}^n)$ be a vector containing an arbitrary initial set of complementarity polyhedra for each player $i$. In \cref{Alg:InnerApproximation:approx}, the algorithm computes, for each player $i$, the associated inner approximations $\hat{\mathscr F}^i$. During the first iteration, if we initialize $\mathcal{J}^i=\emptyset$, then $\hat{\mathscr F}^i= \mathcal{O}^i$.
Similarly to \cref{Alg:FullEnumeration}, \cref{Alg:InnerApproximation} formulates a convexified game $\tilde{N}$ (\cref{Alg:InnerApproximation:tildeN}) where each player's feasible region is the polyhedron $\hat{\mathscr F}^i=\mathcal I^i_{\mathcal{J}^i}$. Since $\tilde N$ is a facile simultaneous game, \cref{Alg:InnerApproximation:tildeN} determines a \PNE by solving an equivalent \LCP.

On the one hand, assume that $\tilde N$ admits the \PNE $\tilde x$. In order to check whether $\tilde x$ is an \MNE for $N$, we employ the routine \texttt{getDeviation} (\cref{Alg:InnerApproximation:deviation}). Given $\tilde x$ and a player $i$, the routine computes the optimal solution $\hat{x}^i$ of the Stackelberg game $S^i$ by fixing the other players' choices $x^{-i}$ to $\tilde{x}^{-i}$, i.e., it computes the so-called \emph{best response} $\hat{x}^i = \arg\min_{x^i} S^i(x^i,\tilde{x}^{-i})$. If $f^i(\tilde{x}^i,\tilde{x}^{-i}) = f^i(\hat{x}^i,\tilde{x}^{-i})$, the routine determines that $\tilde{x}^i$ is optimal and returns $\emptyset$ since there exists no profitable deviation for player $i$. Otherwise, if $f^i(\tilde{x}^i,\tilde{x}^{-i}) \ge f^i(\hat{x}^i,\tilde{x}^{-i})$, $\hat{x}^i$ is a profitable deviation for player $i$, and the routine returns $\hat{x}^i$.
If, for any player $i$, there exists no profitable deviation from $\tilde{x}$ in $N$, then $\tilde x$ is also an \MNE for $N$, and the algorithm returns the \MNE in \cref{Alg:InnerApproximation:yes}. Otherwise, at least one player $i$ has a profitable deviation $\hat{x}^i$, and \cref{Alg:InnerApproximation:encoding} refines $\mathcal{J}^i$ by including the polyhedron containing $\hat{x}^i$. Specifically, in \cref{Alg:InnerApproximation:encoding}, the algorithm adds to $\mathcal{J}^i$ the complementarity polyhedron $\tilde \theta^i$ so that $\hat{x}^i \in \mathscr{P}^i(\tilde \theta^i)$. Consequently, the algorithm starts a recursion with the updated $\mathcal{J}$ in \cref{Alg:InnerApproximation:rec1}.

On the other hand, $\tilde N$ may not admit a \PNE as in \cref{Alg:InnerApproximation:failStep}. In this case, we have no information to guide the refinement of the players' inner approximations. Therefore, the algorithm arbitrarily refines at least one $\mathcal{J}^i$ by including at least one complementarity polyhedron  $\tilde{\theta}^i \notin \mathcal{J}^i$ and start a new recursion (\cref{Alg:InnerApproximation:extension}); we will discuss some strategies regarding the selection of $\tilde{\theta}^i$ in \cref{sec:sub:speed}. If no \MNE to $\tilde N$ exists when the players' approximations are exact, i.e., when $\mathcal{J}^i=\tilde{\mathscr{F}^i}$ for any player $i$, then no \MNE in $N$ exists. The results on the correctness and finite termination of \cref{thm:Alg} also extend to \cref{Alg:InnerApproximation}.
 
\paragraph{Hierarchy of Approximations. }In optimization, given a problem and one of its inner approximations, a solution to the inner approximation is also a solution to the original problem. However, this is not the case for Nash equilibria: in \NASPs, the inner approximated game $\tilde N$ may admit an \MNE that is not an \MNE for the original game $N$, as we show in \cref{ex:InnAppCounter}.
\begin{remark}
	\label{ex:InnAppCounter} 
	Consider a \NASP with $n=2$, where the first \emph{Latin} player solves $\min_x  \{ \xi x : x\in\R, x\geq 0\}$, and the second \emph{Greek} player solves $\min_{\xi,\chi} \{ x\xi  :  \xi \in [-5,5], \chi \ge 0,\; \chi\in\arg\min_{\chi} \{ \chi: \chi \ge \xi-1, \chi \ge -\xi-1\}$.
	This \NASP has no \MNE since the \emph{Greek} player is optimal with $\xi = -5$ for any \emph{Latin}'s player decision, and when $\xi = -5$, the \emph{Latin} player's problem is unbounded.
	By explicitly writing the $KKT$ conditions of the Greek follower's problem, the \emph{Greek}'s problem becomes
	\begin{align*}
		\min_{\xi,\chi,\mu} \left \{
		x\xi: \xi \in [-5,5], \; \mu_1+\mu_2 = 1, \; \chi \ge 0, \;
		\begin{array}{cccc}
		0\le & \mu_1 & \perp \chi - \xi + 1 & \ge 0, \\
		0\le & \mu_2 & \perp \chi + \xi + 1 & \ge 0 
		\end{array}
		\right\}.
	\end{align*}
	If $\mathcal{J}^2=\{(0,0),(1,1),(1,0),(0,1)\}$, then $\mathcal I^2_{\mathcal{J}^2}=\tilde{\mathscr F}^2$, and the approximation is exact.
	While the first two polyhedra $(0,0)$ and $(1,1)$ are empty, the remaining two polyhedra can be projected onto the $\xi $ space as $[-5,-1]\cup [1,5]$. 
	Consider the inner approximation $\mathcal{J}^2 = \{(0, 1)\}$ corresponding to the inner approximated game $\tilde N$, where the Latin player solves $\min_x \left \{ \xi x : x\in\R, x\geq 0 \right\}$ and the Greek player solves $\min_{\xi} \left \{ x\xi  : \xi\in\R, \xi \in [1, 5]\right\}$.
	The inner approximation is exact for the \emph{Latin} player, and $\tilde N$ has a \PNE $(\xi,x)=(0,1)$.
	Conversely, consider a game $N$ where the \emph{Greek}'s objective becomes $\min_{\xi} -x\xi$, and the corresponding inner approximation becomes $\xi \in [-5,-1]$. In this case, while $N$ has an \MNE $(\xi,x) = (0, 5)$, $\tilde N$ would have no \MNE.
\end{remark}

\subsection{Computing \PNEs}
\label{sec:sub:PNE}
Several applications demand deterministic \PNEs instead of randomized \MNEs. This practical need motivates an enhancement of \cref{Alg:FullEnumeration} to compute only \PNEs.
As previously mentioned, any strategy $x^i \in \mathscr{F}^i$ is a pure strategy for player $i$. This also implies that the pure strategies are strictly contained in a single complementarity polyhedron. 
We can practically require that each $x^i$ belongs to $\mathscr{F}^i$ by considering the $\delta$ variables associated with the extended formulation of \cref{thm:Balas} and enforcing them to be binary. Since the $\delta$ are the convex multipliers associated with each complementarity polyhedron, whenever there exists a $j \in \{1,\dots,k\}$ so that $\delta_j=1$, the projection of $x$ onto the original space belongs to the $j$-th complementarity polyhedron.
Thus, we can modify \cref{Alg:FullEnumeration} to compute \PNEs by (i.) solving the \LCP associated with $\tilde N$ as a \MIP, and (ii.) enforcing the $\delta$ ($\delta^i$ for any $i$) variables to be binary. Any \PNE for $\tilde{N}$ is necessarily a \PNE for $N$, and if $\tilde{N}$ has no \PNE, $N$ has no \PNE.

\section{Experiments on Energy Markets}
\label{sec:Compute}
This section introduces an energy market model based on \NASPs.
We analyze and compare the performance of our algorithms, and we provide an extensive set of computational results leading to clear and enlightening managerial insights for market regulators.

\subsection{The Energy Model }
Let the color {\color{rosso}red} identify the \NASP's parameters, the color {\color{blue}blue} identify each player $i$ variables, and the color {\color{green}green} identify the variables of the opponents of $i$\footnote{
We employ color coding on symbols for readability. The color-ready version of the paper is available in the online version only.}.
We model an energy market where a set $\mathcal{R}$ of $n$ energy regulators (e.g., governmental agencies) from different geographical regions oversee the operations of their domestic markets and trade energy among themselves. 
Each regulator $r \in \mathcal{R}$ solves the Stackelberg game 
\begin{subequations}
	\begin{align}
		\label{eq:EC:leader}
		\min_{\substack{\qi[r],\ti[r],\qAimpI[r'\to r], \qexp[r]}} \quad\quad &                                                                                                                                                                  
		\sum_{p\in \mathcal{P}^r} \Cemm[r,p]\qi[r,p] -\taxb \ti[r]\qi[r] + \sum_{r'\in\mathcal{R}\setminus r}\piI[r]\qAimpI[r'\to r] - \piI[r]\qexp[r] \\
		\text{s.t.} \qquad \quad                                        & \ti[r] \quad \le\quad  \tiCap[r] \label{eq:EC:cap},                                                                                                              \\
		                                                                      & \DemInt[r] - \DemSlope[r] \left( \sum_{p\in\mathcal{P}^r} \qi[r,p] + \qimp[r] - \qexp[r], \right) \quad \le\quad  \piCeil[r], \label{eq:EC:price}                \\
		                                                                      & \sum_{r'\in\mathcal{R}}\qAimpI[r'\to r] \quad =\quad  \qimp[r],                                                                            \label{eq:EC:imports} \\
		                                                                      & \qi[r] \quad \in\quad  \sol (\text{Simultaneous game among }\mathcal{P}^r)\label{eq:EC:subgame}.                                                                 
	\end{align}
\end{subequations}
The regulator $r \in \mathcal{R}$ matches the domestic demand of energy given by a demand curve with the intercept $\DemInt[r]$ and the elasticity parameter $\DemSlope[r]$. 
Inside each regulator's market, a set $\mathcal{P}^r$ of energy producers act as followers playing a simultaneous Cournot competition on the amount of energy produced (Constraint \cref{eq:EC:subgame}). In other words, given the domestic demand curve associated with each regulator's $r$ geographical area, each producer $p \in \mathcal{P}^r$ decides the quantity of energy units $\qi[r,p]$ to inject into the market depending on \begin{enumerate*}[label=(\roman*)] \item the regulator's policies, \item the parameters of its cost structure, and \item the current price of energy.\end{enumerate*} The regulator $r$ minimizes a cost $\Cemm[r,p]$ on each unit of energy $\qi[r,p]$ produced by $p$ to reduce emissions while concurrently being constrained to keep the domestic energy price under a given threshold $\piCeil[r]$ (Constraint \cref{eq:EC:price}).
Specifically, $\Cemm[r,p]$ is the product of 
\begin{enumerate*}[label=(\roman*)]
	\item the \emph{social cost of carbon}, i.e., the cost incurred due to the emission of one unit of greenhouse gases, and
	\item the \emph{emission factor}, i.e., the amount of greenhouse gases emitted for each unit of energy produced by $p$.
\end{enumerate*}
In order to meet the demand, $r$ can import $\qimp[r]$ units of energy from other markets so that $\qimp[r]$ is the sum of the amount of energy $\qAimpI[r'\to r]$ imported from any other $r' \in \mathcal{R}\setminus r$ (Constraint \cref{eq:EC:imports}). We let $\taxb \in \{0,1\}$ be a tax parameter. If $\taxb=1$, the regulator collects a tax of $\ti[r]$ on each unit of energy produced in the market with a tax cap of $\tiCap[r]$ (Constraint \cref{eq:EC:cap}); since the objective is no longer linear with $\taxb = 1$, we replace the nonlinear product terms with their McCormick envelopes \citep{mccormick_computability_1976}.
Each regulator minimizes the sum of:
\begin{enumerate*}[label=(\roman*)]
	\item the net emission costs $\Cemm[r,p]\qi[r,p]$ generated by each producer $p$, and
	\item if $\taxb=1$, the negative net total taxes $\ti[r]\qi[r]$, and
	\item the net cost of energy imports $\piI[r]\qAimpI[r'\to r]$ from any other market $r'$, and
	\item the negative net revenues $\piI[r]\qexp[r]$ from energy exports to other countries.
\end{enumerate*}
The import price $\piI[r]$ is a variable linking the different markets since it depends on the energy available from all the regulators. Equivalently, $\piI[r]$ is the shadow price to the so-called \emph{market-clearing constraint}  $\sum_{r' \in\mathcal{R}}  \qAimpI[r\to r'] = \sum_{r\in \mathcal{R}} \qexp[r]$.
In practice, we implement this constraint by introducing the \emph{invisible hand}, a fictitious player whose optimality conditions enforce the market-clearing constraint.  This implies the energy market is a perfect market (i.e., there is perfect competition), similarly to the markets modeled by several authors \citep{egging2010world,egging2008complementarity,gabriel2010solving,Sankaranarayanan2018,feijoo2018future}. Therefore, in any \MNE, $\piI[r]=\piI[r']$ for any $r,r' \in \mathcal{R}$.
Inside the regulator's $r$ market, each producer $p \in \mathcal{P}^r$ solves the convex quadratic problem
\begin{subequations}
	\begin{align}
		\min_{\qi[r,p]}  \quad \quad & \Cilin[r,p] \qi[r,p] + \frac{1}{2}\Ciquad[r,p] {(\qi[r,p])}^2                     
		+\ti[r]\qi[r,p]
		- 
		\left (
		\DemInt[r] - \DemSlope[r] \left(
		\sum_{p'\in\mathcal{P}^r} \qi[r,p'] + \qimp[r] - \qexp[r]
		\right)
		\right )\qi[r,p]\\
		\text{s.t.}\quad       & \qi[r,p] \quad \geq\quad   0, \quad \qi[r,p] \quad \leq \quad \qiCap[r,p]. \qquad 
	\end{align}\label{eq:EC:follower}
\end{subequations}
Each producer $p$ minimizes the sum of
\begin{enumerate*}[label=(\roman*)]
	\item a linear and a quadratic cost of production $\Cilin[r,p] \qi[r,p]+\frac{1}{2}\Ciquad[r,p] {(\qi[r,p])}^2 
	$, and
	\item the tax expenses $\ti[r]\qi[r,p]$, and
	\item the negative net profits given by $\qi[r,p]$ times the current energy price depending on the amount of energy produced inside $r$'s market.
\end{enumerate*}
The producer also has two constraints imposing a non-negative amount of energy $\qi[r,p]$ produced and a capacity cap set at $\qiCap[r,p]$. When $\Ciquad[r,p] \ge 0$, $p$ solves a strictly convex quadratic optimization problem, and therefore, the followers' game in $r$ has a unique equilibrium.

\subsection{Data Generation. } On the one hand, we generate $2$ sets of instances (InstanceSet \emph{A} and \emph{B}) to compare the performances of our algorithms extensively. \emph{InstanceSet A} contains $150$ instances with $3 \le |\mathcal{R}|\le 5$ and $1 \le |\mathcal{P}^r| \le 3$, while \emph{InstanceSet B} contains $50$ instances with $|\mathcal{R}|=7$ and $1 \le |\mathcal{P}^r| \le 3$.
On the other hand, to derive managerial insights and prescriptive recommendations, we generate a real-world Chile-Argentina instance based on real data and a set of instances \emph{InstanceSet Insights} with $|\mathcal{R}|=2$. We employ $3$ realistic taxation schemes for each regulator's $r$ market: 
\begin{enumerate*}[label=(\roman*)]
	\item \emph{Single-Taxation}, where each  $p \in \mathcal{P}^r$ incurs in the same $\ti[r]$ tax,
	\item \emph{Standard Taxation}, where $r$ imposes a custom tax $\ti[r,p]$ on each producer $p \in \mathcal{P}^r$,
	\item \emph{Carbon-Taxation}, where $\ti[r]$ is proportional to $\Cemm[r,p]$ for any $p$, i.e., $\ti[r] = \Cemm[r,p] \textcolor{blue}{\bf t_{e}}^{r}$ where $\textcolor{blue}{\bf t_{e}}^{r}$ is the per-unit emission tax.  
\end{enumerate*}

For any regulator $r$, we randomly draw each $ p \in \mathcal{P}^r$ from $3$ categories: highly polluting producers (e.g., coal and oil plants) with $\Cemm[r,p] \in [300,500]$, averagely polluting producers (e.g., gas-powered plants) with $\Cemm[r,p] \in [100,200]$, and green producers (e.g., solar and wind farms, hydroelectric stations) with $\Cemm[r,p] \in [25,50]$. The associated emission costs $\Cemm[r,p]$ are $USD$ values per GWh of unit-energy and assume that each tonne of CO$_2$ has a social cost of $25\$$. Finally,  $\Cilin[r,p]$, $\Ciquad[r,p]$, and $\qiCap[r,p]$ are in the ranges $[150,300]$, $[0,0.6]$, and $[50, 20000]$, respectively. We refer the reader to the electronic companion for a detailed review of the parameters.

\subsection{Managerial Insights}
\label{sec:sub:strategic}
In our first set of experiments, we attempt to answer the following two managerial questions:

\begin{enumerate}[label=(\roman*)]
	\item [(i)] \textbf{Tax policy.} {Are regulators further reducing their emissions if they consider the carbon tax as a source of income?}
	\item [(ii)] \textbf{Trade policy.} {How does competitive energy trade among different markets affect the overall level of emissions?}
\end{enumerate}
We employ \emph{InstanceSet Insights} with $4$ combinations of parameters for our energy model. First, we employ either a Carbon-Taxation scheme with the revenue term in every regulator's objective or no taxation. Second, we either allow regulators to trade energy or not. 
 
\paragraph{Tax Policy. } 
Several authors argue that carbon tax revenues can further help reduce carbon emissions, promote greener technologies such as carbon sequestration and electric vehicles, and even cover governmental expenses \citep{olsen2018optimal,liu2015economic,amdur2014public}.
Intuitively, the carbon taxes collected by regulators maximizing such incomes may help promote greener energy sources. However, throughout our experiments, we observe the opposite effect. When a regulator maximizes its carbon-tax revenues with $\taxb=1$, it also systematically imposes a smaller carbon tax compared to the case where it does not optimize for carbon-tax revenues ($\taxb=0$). Therefore, as a feedback effect, coal and gas producers generate more energy than they would produce with a greater taxation level, and the regulator collects more tax incomes. While the overall emission levels are modest compared to a no-taxation scheme, they are more significant compared to $\taxb=0$.
Specifically, in $40$ out of the $50$ test instances, both markets' net emissions increased by an average of $13.5$\% with $\taxb=1$. Further, a statistical t-test rejects the null hypothesis ($p$-value of $0.00018$) that the global emissions are equal with and without the regulators maximizing a carbon tax. Similarly, we observe decreased level of energy trade in $30$ out of $50$ instances and, on average, a decrease of about $7.8$\% with $\taxb=1$ compared with $\taxb=0$. However, a similar t-test does not support the hypothesis that the traded quantities of energy in the two cases have the same population mean ($p$-value of $0.29$).

\paragraph{ Trade policy. } We observe a decreased taxation level when the regulators' markets can exchange energy. Quantitatively, the average tax rate drops by $12.9$\% when the markets can trade energy. However, in the $63\%$ of the tests, the tax rate slightly increases when markets can trade. Nevertheless, in the instances with a lower tax rate and trade, the tax rate is significantly lower than in the cases where markets do not trade.
We further study whether an increased trade intensity could exhibit an increased emission level as an externality. However, the overall level of emissions consistently decreases when markets trade energy. Specifically, we observe a substitution effect where clean energy producers fulfill the demand of the emission-intensive market through energy exports. Indeed, the average net emissions drop by $35.9$\% when markets exchange energy, and the net emissions never increase. With the same energy consumption levels, the single market's emissions may increase while the overall emissions in both markets decrease.

\paragraph{General Remarks. }Overall, regulators are not necessarily reducing their net emissions if they consider carbon taxes as a source of revenue, and the free and competitive trade of energy among markets tends to reduce the total emissions. We remark that our insights are sensitive to the producers' cost, capacity, emission factors, and domestic energy demands. Nevertheless, \NASPs provide an effective and flexible framework to perform such analyses and potentially extend them to richer energy models.  We refer the reader to the electronic companion for the expanded set of computational results involving \emph{InstanceSet Insights}.

\subsection{The Chilean-Argentinan Case Study}
\label{sec:sub:chilearg}
We model the Chilean and Argentinean electricity markets through the energy \NASP by using real data from the years $2018-2019$. \footnote{We source the technical and trade data (e.g., fuel consumption, capacity factors, and variable costs as well as the units of energy exchanged) from the Chilean \emph{Comision Nacional de Energia} and the US Energy Information Administration. Additional data is available at \url{https://www.iea.org/countries/chile} and \url{https://www.iea.org/countries/argentina}} 
The two countries started trading electricity in $2016$, with Chile exporting a small amount of electricity ($1558$ MWh) to Argentina.  Furthermore, since the countries signed a cooperation agreement for electricity and gas in $2019$, experts expect an increased level of trade in the future \citep{simsek_review_2019,fondazione_enel_2019}. 
Furthermore, both countries joined the \emph{Paris agreement}, committing to the decarbonization of their energy systems. In this respect, Chile was the first country in Latin America to implement a carbon tax ($5$ USD$\$$ per tonne of $CO_2$), followed by Argentina ($10$ USD$\$$ per tonne of $CO_2$). 
We analyze the impact of an integrated energy market between the two domestic markets of Chile and Argentina. The energy regulators of both countries can impose carbon policies in the form of taxes. We model several different energy producers in each country. In the Chilean case, we consider hydroelectric, solar, wind, natural gas, and coal as electricity sources.  In the Argentinean case, we mainly consider gas-powered thermal and hydroelectric plants. The historical demand is $129$ TWh/year for Argentina and $60$ TWh/year for Chile.
We analyze how the markets react  (i.) under different electricity-trade policies and (ii.) under the forecasted future levels of renewable sources of electricity.

\paragraph{Insights Without Renewable Expansion. } We test our model when the available capacity for renewables is as of $2019$. On the one hand, when the markets cannot trade energy, the overall production tends to favor non-renewable sources. Specifically, $71$\% of the generation in Argentina comes from gas-powered power plants, while hydroelectric plants fulfill the remaining demand. In the Chilean market, coal and gas power plants meet $42$\% of the demand, hydroelectric energy accounts for the $36$\%, and renewable sources (solar and wind) account for approximately the $15$\%. 
On the other hand, when markets trade, we observe a remarkable substitution effect where Chilean imports from Argentina replace carbon-intensive (coal and gas) sources in Chile. Further, we observe an increase in the carbon tax in Chile and, symmetrically, a decrease in Argentina. However, in Argentina, the exports to Chile tend to increase the domestic electricity price, thus contracting the local demand.
Overall, without an increase in the renewable capacity or a significant decrease in carbon's social cost, our model predicts that Argentina's market could experience increased energy prices. Therefore, unless high-capacity renewable sources can operate in the countries, the trade between Chile and Argentina may be limited.

\paragraph{Insights With Renewable Expansion. } To assess the likelihood of future trade under large renewable deployments, we consider two expansions of wind and solar capacities in Chile for $20$ GWh and $40$ GWh, respectively \citep{amigo2021two}. When markets do not trade, we observe an increasing energy price drop and demand increase in Chile; In the $40$ GWh case, the energy price falls by $13$\%, and the domestic demand increases by $20$\%. 
When trade is allowed, the Argentinean market starts importing small quantities of energy from Chile with an expansion of $20$ GWh. With an expansion of $40$ GWh, the Argentinean market intensively imports energy from Chile, up to $12$ TWh/year. Remarkably, similarly to what was observed in Chile, the Argentinean market experiences a price drop and an increase in domestic demand.

\subsection{Performance Analysis}
\label{sec:sub:speed}

In terms of performance analysis, we test \cref{Alg:FullEnumeration} and \cref{Alg:InnerApproximation} on \emph{InstanceSet A} and \emph{B}.
We mark an instance as \emph{solved} when the algorithm either finds an \MNE or certifies its non-existence within the time limit of $1800$ seconds.
\cref{tab:SummaryA,tab:SummaryB} summarize the computational results for \emph{InstanceSetA} and \emph{InstanceSetB}, respectively. 
The upper parts of the tables report results for \cref{Alg:FullEnumeration} (\emph{FE}) and \cref{Alg:InnerApproximation} (\emph{InnerApp}) for generic \MNEs, while the bottom parts report the results for the \PNE variant of \cref{Alg:FullEnumeration} ($FE-P$).
In \cref{Alg:InnerApproximation:extension} of \cref{Alg:InnerApproximation}, we select a total of $\hat{\theta}$ complementarity polyhedra $\tilde \theta^i$ by employing $3$ strategies ($ES$): given a lexicographic order of each player's polyhedra, we add $\hat{\theta}$ polyhedra \emph{sequentially} ($Seq$), \emph{reverse-sequentially} ($RSeq$), or \emph{randomly} ($Rand$). 
In the $Time (s)$ columns, we report the average time in seconds when the algorithm \begin{enumerate*}[label=(\roman*)]
\item finds an \MNE (\emph{EQ}), and
\item proves no \MNE exists (\emph{NO}), and
\item solves the instance or hits the time limit.
\end{enumerate*} 
In the $Wins$ columns, we report the number of times the algorithm outperforms the others in computing time when either an \MNE exists ($EQ$) or not ($NO$). Finally, we report the number of solved instances in the last column ($Solved$).
\emph{InnerApp} outperforms \emph{FE}, being on average twice as fast, and up to $30$ times faster when an \MNE exists (e.g., \emph{InnerApp}-RevSeq-1). Furthermore, in \emph{InstanceSet B}, \emph{InnerApp} manages to solve $47$ out of the $50$ hard instances compared to the $20$ solved by \emph{FE}. When no \MNE exists, \emph{FE} tends to outperform \emph{InnerApp} since the latter necessarily needs to converge to the exact approximation to provide a certificate of non-existence. We remark that both \emph{InnerApp} and \emph{FE} may also return a \PNE, while \emph{FE-P} will necessarily return a \PNE if one exists. Empirically, a \PNE exists in $37.6\%$ and $30.4\%$ of instances in the \emph{InstanceSetA} and \emph{InstanceSetB}, respectively. 

\begin{table}[!ht]
\centering
\caption{Results for \emph{InstanceSetA}. } \label{tab:SummaryA}
\scriptsize

\begin{tabular}{@{}c@{\hspace{2em}}r@{\hspace{1.2em}}r@{\hspace{1.2em}}c@{\hspace{4.0em}}r@{\hspace{2.0em}}r@{\hspace{2.0em}}r@{\hspace{4.0em}}r@{\hspace{2.0em}}r@{\hspace{2.0em}}r@{\hspace{2.0em}}}
	\toprule
	\multicolumn{1}{l}{}   & 
	\multicolumn{3}{c}{\textbf{\textbf{}}}   & 
	\multicolumn{3}{c}{\textbf{Time (s)}} & 
	\multicolumn{2}{c}{\textbf{Wins}} & 
	\multicolumn{1}{l}{}  \\ 
	
	\multicolumn{1}{l}{}   & {\textbf{Algorithm}}  & { \textbf{ES}} & { $\hat{\theta}$} & { \textbf{EQ}} & { \textbf{NO}} & { \textbf{All}} & { \textbf{EQ}} & { \textbf{NO}} & { \textbf{Solved}} \\ \midrule
 & \emph{FE}   & -  & -     & 26.78  & 0.12 & 120.21 &  6 & 82 & 140/149  \\ \midrule
 &  & Seq & 1       & 6.18  & 0.35  & 51.33 & 3  &    0 &   145/149 \\
 &  & Seq & 3       & 16.20 & 0.18  & 55.82 & 5  &    0 &   145/149 \\
 &  & Seq & 5       & 5.85  & 0.15  & 51.08 & 3  &    0 &   145/149  \\
 &  & RSeq & 1   & 7.33  & 0.36  & 3.73 &  26 &    0 &   149/149  \\
 &  & RSeq & 3   & 10.31 & 0.18  & 53.12 & 4  &    0 &   145/149  \\
 &  & RSeq & 5   & 8.68  & 0.15  & 76.41 & 5  &    0 &   143/149  \\
 &  & Rand  & 1   & 4.80  & 0.36  & 26.60 & 8  &    0 &   147/149 \\
 &  & Rand  & 3   & 29.49 & 0.18  & 85.65 & 5  &    0 &   143/149 \\
\multirow{-10}{*}{{ \MNE}} & \multirow{-9}{*}{\emph{InnerApp}} 
 & Rand  & 5   & 21.59 & 0.15  & 58.26 & 2  &    0 &   145/149   \\ \midrule
 \PNE & \emph{FE-P} & - & -  & 6.46   & 0.12 & 328.23  & -- & --  &   122/149 \\
	& 
\end{tabular}

\end{table}

\begin{table}[!ht]
\centering
\caption{Results for \emph{InstanceSetB}.} \label{tab:SummaryB}
\scriptsize
\begin{tabular}{@{}c@{\hspace{2em}}r@{\hspace{1.2em}}r@{\hspace{1.2em}}c@{\hspace{4.0em}}r@{\hspace{2.0em}}r@{\hspace{2.0em}}r@{\hspace{4.0em}}r@{\hspace{2.0em}}r@{\hspace{2.0em}}r@{\hspace{2.0em}}}
	\toprule
	\multicolumn{1}{l}{}   & 
	\multicolumn{3}{c}{\textbf{\textbf{}}}   & 
	\multicolumn{3}{c}{\textbf{Time (s)}} & 
	\multicolumn{2}{c}{\textbf{Wins}} & 
	\multicolumn{1}{l}{}  \\ 
	
	\multicolumn{1}{l}{}   & {\textbf{Algorithm}}  & { \textbf{ES}} & { $\hat{\theta}$} & { \textbf{EQ}} & { \textbf{NO}} & { \textbf{All}} & { \textbf{EQ}} & { \textbf{NO}} & { \textbf{Solved}} \\ \midrule
	
 & \emph{FE}   & -  & -     & 260.29  & 1.12 & 1174.32 &  0 & 2 & 20/50   \\ \midrule
 &  & Seq & 1       & 39.26  & 9.64 & 672.24 &   1  &    0 &   32/50 \\
 &  & Seq & 3       & 62.66 & 3.88  & 616.25 &   1  &    0 &   34/50 \\
 &  & Seq & 5       & 24.03  & 2.83 & 733.97 &   1  &    0 &   30/50  \\
 &  & RSeq & 1   & 171.47 & 9.66 & 262.74 &   27 &    0 &   47/50  \\
 &  & RSeq & 3   & 13.85 & 3.86  & 585.27 &   4  &    0 &   34/50  \\
 &  & RSeq & 5   & 78.57  & 2.83 & 798.90 &   6  &    0 &   29/50  \\
 &  & Rand  & 1   & 34.65 & 9.65  & 497.06 &   0  &    0 &   37/50 \\
 &  & Rand  & 3   & 123.02 & 3.87 & 588.03 &   2  &    0 &   36/50 \\
\multirow{-10}{*}{{ \MNE}} & \multirow{-9}{*}{\emph{InnerApp}} 
 & Rand  & 5   & 39.18 & 2.86  & 711.77 & 4  &    0 &   41/50   \\ \midrule
 
 \PNE & \emph{FE-P} & - & -  & 7.36   & 1.12 & 1441.95  & -- & --  &   10/50 \\
	&
\end{tabular}
\end{table}
 
In order to provide a baseline result, we also compare our algorithm with the $SGM$ algorithm from \citet{Carvalho2020computing}. The $SGM$ algorithm computes equilibria for bounded Integer Programming Games, i.e., simultaneous non-cooperative games where each player solves a bounded parametrized mixed-integer program. We reformulate each Stackelberg game $S^i$ as a parametrized mixed-integer program. Specifically, we reformulate \cref{eq:BilevelSet} as a set of linear inequalities and integer variables associated with the complementarities. Although this reformulation is exact, the resulting parametrized mixed-integer programs may be unbounded and it might prevent $SGM$ from terminating. We implement the market clearing constraints by introducing a virtual player solving the problem $\max_{\pi} \{ \sum_{r \in \mathcal{R}}(\sum_{r' \in\mathcal{R}}  \qAimpI[r\to r'] - \sum_{r\in \mathcal{R}} \qexp[r]) \pi \}$, where $\pi$ is the clearing price. In any \MNE, the virtual player guarantees that $\piI[r] = \pi$ and $\qAimpI[r\to r'] =\sum_{r\in \mathcal{R}} \qexp[r]$ for any $r \in \mathcal{R}$. Since the virtual player solves an unbounded problem, at each iteration, we compute the \MNE among the non-virtual players by fixing the price $\piI[r]$ to a specific value. Once the $SGM$ computes a candidate \MNE with $\qAimpI[r\to r'] \neq \sum_{r\in \mathcal{R}} \qexp[r]$, we increase the price $\pi$ whenever $\qAimpI[r\to r'] > \sum_{r\in \mathcal{R}} \qexp[r]$ and decrease it whenever $\qAimpI[r\to r'] < \sum_{r\in \mathcal{R}} \qexp[r]$.
In \cref{tab:SGM}, we compare the computational results of, for instance, our algorithm $FE$ with the ones of $SGM$ on \emph{InstanceSet A}. Our $FE$ dominates the $SGM$ in computing times (by a factor of $10$) and solved instances. $SGM$ solves less than $37\%$ of the instances, and, when finding an \MNE, the algorithm requires, on average, $308.68$ seconds, an increase of $11.5$ times when compared to $FE$. Similarly, when $SGM$ does not find an \MNE, the algorithm requires $55.44$ seconds on average, compared to $0.12$ seconds required by $FE$. We remark that $SGM$ cannot provide proof of non-existence; thus, it may not terminate when the players' optimization problems are unbounded, and an \MNE does not exist. Indeed, most of the time limits are related to the non-existence of an equilibrium.

\begin{table}[!ht]
	\centering
	\caption{Comparison with \emph{SGM} on \emph{InstanceSet A}. } \label{tab:SGM}
	\scriptsize
	
	\begin{tabular}{l@{\hspace{2em}}r@{\hspace{4em}}r@{\hspace{2em}}r@{\hspace{2em}}r@{\hspace{4em}}r@{\hspace{2em}}r}
		\toprule
		\multicolumn{2}{l}{}   & 
		\multicolumn{3}{c}{\textbf{Time (s)}} & 
		\multicolumn{1}{l}{}  \\ 
			
		\multicolumn{1}{l}{} & {\textbf{Algorithm}} & { \textbf{EQ}} & { \textbf{NO}} & { \textbf{All}} & { \textbf{Solved}} \\ \midrule
		                     & \emph{SGM}           & 308.68         & 55.44          & 1191.92         & 55/149             \\
		                     & \emph{FE}            & 26.78          & 0.12           & 120.21          & 140/149            \\ \bottomrule
	\end{tabular}
	
\end{table}

\section{Concluding Remarks}\label{sec:conclusions}
We introduced \NASPs, a class of non-cooperative and simultaneous games among Stackelberg games. \NASPs capture complex hierarchical interactions among decision-makers solving optimization problems, and they can express the complexity associated with many economic markets. We employed theory and tools from optimization -- specifically, polyhedral theory -- and we provided a series of theoretical and algorithmic characterizations of \NASPs.
From a theoretical perspective, we proved that the problem of deciding if a \NASP admits an \MNE is \SigmaTwoP. Furthermore, we demonstrated the equivalence of computing an \MNE in a \NASP and computing a \PNE in a convexified version of the game. From an algorithmic standpoint, we provided an exact algorithm to compute and select \MNEs, and a variant for computing \PNEs. We further introduced a refined version of our original algorithm to exploit an increasingly-accurate sequence of convex inner approximations of the game. 
From a practical perspective, we contextualized \NASPs in international energy markets by proposing a data-rich and realistic model capable of providing valuable managerial insights from Nash equilibria. We analyzed a real-world case study of the Chilean-Argentinean energy market and unveiled counterintuitive consequences of policymaking for climate change-aware regulators.

We believe this paper establishes a solid benchmark for future work combining optimization and game theory. Moreover, it exposes the need for novel game theory frameworks capturing the complex interactions among self-driven decision-makers, as extensively motivated in our energy applications. 
On the one hand, we hopefully foresee further methodological developments extending our approach to other hierarchical games (e.g., multi-leader games with common interacting followers) and novel applications of \NASPs. On the other hand, we hope our analysis of energy markets could be extended to other domains to derive insightful managerial insights.

\newpage
\bibliography{Resources/Main}

\newpage
\section*{Supplementary Material}
In this electronic companion, we complement the proofs of \cref{sec:Hardness} in \cref{sec_sup:proofs}, and we provide an instance of \NASP without \PNEs in \cref{sec_sup:noPNE}. Finally, in \cref{sec_sup:instances}, we provide a detailed overview of the computational tests.

\section{Proof of \cref{thm:MixHard}}
\label{sec_sup:proofs}

Before proving \cref{thm:MixHard}, we formally define the concept of \emph{trivial \NASP} and introduce the two technical lemmata regarding the extended formulations of the union of polyhedra.

\subsection{Trivial \NASP and Extended Formulations, and Stackelberg Games}
\begin{definition}[Trivial NASP]\label{Def:TrivNASP}	A trivial \NASP $N=(S^1,S^2$) is a \NASP with $n=2$ where, for any $i$, $P^i$ is a simple Stackelberg game and each leader has a single follower solving a linear program with a simple parameterization with respect to its leader's variables.
\end{definition}
The additional assumptions of a trivial \NASP are seemingly strong. Specifically, we limit each leader to having one follower and each follower to solve a linear program with a simple parameterization with respect to its leader's variables.  
\citet{Basu2018a} proved that any finite union of polyhedra is the feasible region of a simple Stackelberg game in a lifted space. We formalize their result for $2$ polyhedra in \cref{lem:union}.

\begin{lemma}\label{lem:union}
  Consider the union of two polyhedra $\mathcal S := \left\{ (h,y,w)\in\R^3_+: h=w,\;  y =1 \right\} \cup \left\{ (h,y,w) \in \R_+^3: h=0,\;y=0 \right\}$.
  $\mathcal S$ has an extended formulation as a feasible set of a simple Stackelberg game.
\end{lemma}

\begin{proof}[Proof of \cref{lem:union}.]
Let $z_1,z_2,\dots$ be the variables in the lifted space, i.e., variables that can be projected out. The extended formulation of $\mathcal{S}$ is equivalent to the Stackelberg feasible region
  \begin{subequations}
  \begin{align}
w \quad \geq\quad 0, \quad  y \quad\geq\quad 0 \quad  h \quad\geq\quad 0,  & \quad y \quad\leq\quad 1,\quad h \quad\leq\quad w, \quad z_1,\dots,z_6 \quad\geq\quad 0\\
(z_1,,\dots,z_6) \quad&\in\quad \arg \min_z \left\{ 
  \sum_{i=1}^6 z_i: \begin{array}{ccc}
	z_1 \geq h-w &;& z_1 \geq -h\\
	z_2 \geq 1-y &;& z_2 \geq -h\\
	z_3 \geq y-1 &;& z_3 \geq -h\\
	z_4 \geq w-h &;& z_4 \geq -y\\
	z_5 \geq h-w &;& z_5 \geq -y\\
	z_6 \geq y-1 &;& z_6 \geq -y 
  \end{array}
\right\}.
  \end{align} 
	\label{eq:hardLem}
  \end{subequations}
The above formulation is the feasible set of a Stackelberg game. 
\end{proof}

Furthermore, we show that if two sets $S\subseteq\R^{n_1}$ and $T\subseteq\R^{n_2}$ have an extended formulation as Stackelberg feasible regions, so does $S\times T$.

\begin{lemma}\label{lem:BilCart}
  Suppose $S\subseteq\R^{n_1}$ and $T\subseteq\R^{n_2}$ have an extended formulation as Stackelberg feasible regions. Then, $S\times T$ has an extended formulation as a Stackelberg feasible region.
\end{lemma}

\begin{proof}[Proof of \cref{lem:BilCart}.]
Let the extended formulation of $S$ be $\{ (w,y): A_Sw+B_Sy \leq b_S, \; y\in\arg\min\{ f_S^Ty:C_Sw+D_Sy \leq g_S \} \}$, and the extended formulation of $T$ be $\{ (w,y): A_Tw+B_Ty \leq b_T, \; y\in\arg\min\{ f_T^Ty:C_Tw+D_Ty \leq g_T \} \}$. The extended formulation of $S\times T$ is
\begin{align*}
  A_Sw+B_Sy \quad \leq \quad b_S, & \quad A_Tu + B_Tv \quad \leq \quad b_T, \\
  (y,v) \quad& \in \quad  \arg\min \left\{ f_S^Ty+f_T^Tv: \begin{array}{c} C_Sw+D_Sy \quad \leq \quad g_S \\ C_Tu+D_Ty \quad \leq \quad g_T \end{array} \right\}.  
\end{align*}
\end{proof}

\subsection{Proof of \cref{thm:MixHard}}
We prove the hardness result by performing a reduction of a trivial \NASP to the \SSI problem (\cref{Def:SubSum}).

\begin{proof}[Proof of \cref{thm:MixHard}.]
  We reduce \SSI to the problem of deciding the existence of an \MNE for the trivial \NASP in \cref{eq:MixHard}. 
  Let $Q = \sum_{i=1}^k q_i$, and let the two Stackelberg games $(S^1,S^2)$ be the Latin and the Greek game, respectively. The proof develops around the following $7$ core claims.

\begin{claim}\label{cl:MNE:NASP}
  The game defined in \cref{eq:MixHard} is a trivial \NASP.
\end{claim}
  
\begin{claim}\label{cl:MNE:Disj}
  The region of space for $w$ defined by \cref{eq:MLat:xpos,eq:MLat:bil} is the Cartesian product of $(\{w_i:w_i\leq 0\}\cup \{w_i:w_i\geq 1\}  )$ for $i=0,\dots,k$. Similarly, the region of space for $\xi$ defined by \cref{eq:Gr:bil,eq:Gr:chipos} is the Cartesian product of $(\{\xi_i:\xi_i\leq 0\}\cup \{\xi_i:\xi_i\geq 0\} )$ for any $i=1,\dots,k$.
\end{claim}

\begin{claim}\label{cl:MNE:xInt}
  $w_{k+3r+1}$ takes only integer values.
\end{claim}

\begin{claim} \label{cl:MNE:squ}
  $ (w_{k+3r+1})^2 = \sum_{i=1}^r 2^{i-1}w_{k+i} + pw_{k+3r+1}$ holds for the Latin player's feasible set. 
\end{claim}

\begin{claim}\label{cl:MNE:LatCons}
  Given some $\xi_{r+1} \in \Z$ between $p$ and $t-1$, the Latin player has a profitable unilateral deviation for any feasible strategy with $w_{k+3r+1} \neq \xi_{r+1}$.
\end{claim}
\begin{claim}\label{cl:MNE:YES}
  If \SSI has decision YES, then \cref{eq:MixHard} has a \PNE (and hence an \MNE).
\end{claim}

\begin{claim}\label{cl:MNE:NO}
  If \SSI has decision NO, then \cref{eq:MixHard} has no \MNE.
\end{claim}

\noindent {\em Proof of \cref{cl:MNE:NASP}.}  
The constraints and the players' objectives are linear in their variables. The constraints \cref{eq:MLat:S} are valid due to \cref{lem:union}. Because of \cref{lem:BilCart}, we can reformulate the constraints \cref{eq:MLat:S,eq:MLat:bil} as the feasible region of a Stackelberg game. Furthermore, each follower solves a linear program parametrized in its leader's variables. Therefore, \cref{eq:MixHard} is a trivial \NASP.

\noindent {\em Proof of \cref{cl:MNE:Disj}.}  
The constraints in \cref{eq:MLat:bil} enforce that $y_i \geq \max (-w_i, w_i-1)$. Since $y_i$ is minimized, it necessarily equals $\max(w_i-1, -w_i)$. However, as of \cref{eq:MLat:xpos}, $y_i$ is non-negative. Therefore, either $w_i \leq 0$ or $1-w_i \leq 0$. Equivalently, we can express this disjunction as $(\{w_i:w_i\leq 0\}\cup \{w_i:w_i\geq 1\}  )$. The same reasoning holds for the Greek player.

\noindent {\em Proof of \cref{cl:MNE:xInt}.} From \cref{eq:MLat:S}, each $w_{k+r+i}$ is either $0$ or $1$ for any $i=1,\dots,r$. Depending on the value of $w_{k+r+i}$, the variable is in either one of the polyhedra defining $S$ (see \cref{lem:union}). Moreover, since in \cref{eq:MLat:xbin} the RHS is a sum of integers, the LHS $w_{k+3r+1}$ is also an integer.

\noindent {\em Proof of \cref{cl:MNE:squ}.} Consider the set $S$ defined in \cref{lem:union}. For any point $h = w$ and $y = 1$ in the first polyhedron, $h = wy$. Similarly, for any point $h = 0$ and $y = 0$ in the second polyhedron, $h = wy$. Thus, the nonlinear equation $h = wy$ is valid for the set $S$. Similarly, by multiplying both sides of \cref{eq:MLat:xbin} with $w_{k+3r+1}$, we obtain
\begin{align*}
(w_{k+3r+1})^2 \quad&=\quad  pw_{k+2r+1} + \sum_{i=1}^r 2^{i-1} w_{k+r+i}w_{k+3r+1} \\
\quad&=\quad pw_{k+3r+1} + \sum_{i=1}^r 2^{i-1} w_{k+r+i}w_{k+2r+i}\\
\quad&=\quad pw_{k+3r+1} + \sum_{i=1}^r 2^{i-1} w_{k+i}.
\end{align*}
The second equality follows from \cref{eq:MLat:xeq}, i.e., $w_{k+3r+1}=w_{k+2r+i}$.

\noindent {\em Proof of \cref{cl:MNE:LatCons}.} If $\xi_{r+1} \in [p,t-1]$, then $w_{k+3r+1} = \xi_{r+1}$ is feasible for the Latin player. Consider the last two terms of the Latin player's objective function. From \cref{cl:MNE:squ}, we can rewrite them as $(Q+1) (2\xi_{r+1} w_{k+3r+1} - w_{k+3r+1}^2)$. The latter expression attains its maximum value for $w_{k+3r+1} = \xi_{r+1}$. We argue that the Latin player can never be optimal by choosing $w_{k+3r+1} \neq \xi_{r+1}$. Note that $w_{k+3r+1}$ only takes integer values (\cref{cl:MNE:xInt}), and, for any choice $w_{k+3r+1} \neq \xi_{r+1}$, the objective function decreases at least of $Q+1$. However, even if each of the other objective's terms attains its maximum possible value, the largest value they add up to is $0.5 + Q$. Since $0.5 + Q < Q+1$, the claim follows.

\noindent {\em Proof of \cref{cl:MNE:YES}.} Let $s$ be an integer so that $p \leq s < t$ and for all $I\subseteq \left\{ 1,\dots, k \right\}$ with $\sum_{i\in I}q_i \neq s$, and let $b_1,\dots,b_r \in \left\{ 0,1 \right\}$ be the unique $r$-bit binary representation of $s-p$. Consider the pure strategies 
  \begin{align}
	w_{k+3r+1}  \quad&=\quad  s, \nonumber \\
	w_{k+2r+i}  \quad&=\quad  s & i=1,\dots,r, \nonumber \\
	w_{k+r+i}  \quad&=\quad b_i  & i=1,\dots,r,\nonumber  \\
	w_{k+i} \quad&=\quad b_is & i=1,\dots,r, \nonumber \\
	w_0 \quad&=\quad 1, \nonumber \\ %
	\xi_0  \quad&=\quad  0, \nonumber \\
	\xi_i \quad&=\quad b_i  & i=1,\dots,r, \nonumber \\
	\xi_{r+1} \quad&=\quad s \label{eq:strategy}.
  \end{align}
Select $w_{i} \in \{0,1\}$ for $i=1,\dots, k$ so that  $\sum_{i=1}^k q_iw_i$ is the largest value not exceeding $s$. Since we assume this is a YES instance of \SSI, $\sum_{i=1}^k q_iw_i\leq s-1$, and the strategy \cref{eq:strategy} is feasible for both the players. On the one hand, the Latin player has no feasible and profitable deviation since (i.) $w_{k+3r+1}$ should take integer values as of \cref{cl:MNE:LatCons}, and (ii.) $w_{k+3r+1}=s$ as of \cref{eq:strategy}, and (iii.) the first two terms in the above strategy already attain the largest possible value not violating \cref{eq:MLat:knap}. 
On the other hand, the Greek player cannot improve its objective value from $0$ since $w_0 = 1$. Therefore, the strategy in \cref{eq:strategy} is a \PNE.

\noindent {\em Proof of \cref{cl:MNE:NO}.} Recall that $w_{k+3r+1}$ is an integer between $p$ and $t-1$. For any choice of $w_{k+3r+1}$, $w_0 = 0$ and $w_1,\dots, w_k$ are so that \cref{eq:MLat:knap} holds at the equality. Therefore, there is no incentive to set $w_0 = 1$ since the choice will only increase the player's objective of $0.5$. However, with $w_0 = 0$, the Greek player can assign arbitrarily large values to $\xi_0$. Hence, there always exists a value of $\xi_0$ constituting a profitable deviation for the Greek player. Thus, the game has no \MNE.
\end{proof}

\section{\NASP with no \PNE but only an \MNE}
\label{sec_sup:noPNE}
\begin{example}
\label{app:PneMne}
Consider the following Latin-Greek trivial \NASP.

\begin{subequations}
\noindent{\bf Latin Player}
  \begin{align}
      \max_{w,y}\quad&  w_1\xi_1 + w_2\xi_2 &\\
      \text{s.t.} \quad &w, y \quad\geq\quad 0,\quad w \quad\leq\quad 1, \quad w_1+w_2\quad=\quad 1,\\
      &y\quad\in\quad \arg\min_y \left\{ y_1+y_2:
	\begin{array}{l}
	y_i \geq -w_i\\
	y_i \geq w_i-1
	\end{array}\,\text{for } i=1,2
	\right \}.
  \end{align}
  \noindent{\bf Greek Player}
  \begin{align}
      \max_{\xi,\chi}\quad&  w_2\xi_1 + w_1\xi_2 &\\
      \text{s.t.} \quad & \xi, \chi \quad\geq\quad 0, \quad \xi \quad\leq\quad 1, \quad \xi_1+ \xi_2 \quad=\quad 1,\\
      & \chi\quad\in\quad \arg\min_\chi \left\{ \chi_1+\chi_2:
	\begin{array}{l}
	\chi_i \geq -\xi_i\\
	\chi_i \geq \xi_i-1
	\end{array}\,\text{for } i=1,2
	\right \}.
  \end{align}\label{eq:PneMne}
\end{subequations}
The only feasible strategies are $\{(1,0,0,0), (0,1,0,0)\}$ for both $(w_1,w_2,y_1,y_2)$ and $(\xi_1,\xi_2,\chi_1,\chi_2)$. We can equivalently reformulate the game as a normal-form game as follows. If the Latin and the Greek player select the same strategy, then the Latin player gets a payoff of $1$, and the Greek player gets a payoff of $0$. Otherwise, if they select different strategies, the Latin player gets a payoff of $0$, and the Greek player gets a payoff of $1$.  The only Nash equilibrium of this normal-form game is an \MNE, and, therefore, no \PNE exists.
\end{example}

\section{Computational Tests}
\label{sec_sup:instances}

\subsection{Instance sets}
Our instances and code are available in our Github repository at \url{https://github.com/ssriram1992/EPECsolve}. A generalization of our code is also in \url{https://github.com/ds4dm/ZERO}. We tested our algorithms on an $8$-cores Intel(R) Xeon Gold 6142 with $32$GB of RAM and $Gurobi$ 9.0. We generated three instances sets for our tests:

\begin{enumerate}[label=(\roman*)]
\item \emph{InstanceSet A} contains $149$ instances with $n \in [3,5]$ and up to $3$ producers per regulator.
\item \emph{InstanceSet B} contains $50$ instances with $n=7$ and up to $3$ producers per regulator. These instances cannot be solved by \cref{Alg:FullEnumeration} within $10$ seconds.
\item \emph{InstanceSet Insights} contains $50$ instances with $n=22$ countries and $3$ producers per regulator. 
\end{enumerate} In \cref{tab:Parameters}, we provide a brief overview of the parameters we employed to generate our instances.

\begin{table}[!ht]
\renewcommand{\arraystretch}{1.5}
\scriptsize
\begin{tabularx}{\textwidth}{SlSlX}
\toprule
\textbf{Parameter}                & \textbf{Distribution}                     & \textbf{Notes}                                              \\ \midrule
\textbf{Capacities $\qiCap[r,p]$}      & $50, 100, 130, 170, 200, 1000, 1050, 20000$ & {Each follower’s capacity is randomly drawn from these values independently from the type of production (i.e., renewable vs non-renewable). }                                      \\
\textbf{Emission Costs $\Cemm[r,p]$ }  & $25, 50, 100, 200, 300, 500, 550, 600$      & {The first two values are reserved for green producers. The following two are for averagely-polluting producers, while the remaining three are for highly-polluting ones.}                                          \\
\textbf{Linear Costs $\Cilin[r,p]$}    & $150, 200, 220, 250, 275, 290, 300$         & {Linear costs are generally inversely proportional to the emission cost.  }                    \\
\textbf{Quadratic Costs $\Ciquad[r,p]$} & $0, 0.1, 0.2, 0.3, 0.5, 0.55, 0.6$          & {Quadratic costs are generally inversely proportional to the emission cost. }                         \\
\textbf{Tax Caps $\tiCap[r]$}        & $0, 50, 100, 150, 200, 250, 275, 300$       & {Tax caps are generally inversely proportional to the emission cost.}                                   \\
\textbf{Demand Alpha $\DemInt[r]$}    & $275, 300, 325, 350, 375, 450   $           & {Each market's alpha is randomly drawn from this set. }                  \\
\textbf{Demand Beta $\DemSlope[r]$}     & $0.5, 0.6, 0.7, 0.75, 0.8, 0.9 $            & {Each market's beta is randomly drawn from this set.}                                 \\
\textbf{Price Cap $\piCeil[r]$}       & $0.8, 0.85, 0.90, 0.95   $                  & {Each market's price-limit is randomly drawn from this set. The final price limit is made of the product of this value and $\DemInt[r]$. }                                       \\
\textbf{Tax Paradigm}    & Standard, Single, Carbon                  & {\emph{Single-Taxation}, where each  $p \in \mathcal{P}^r$ incurs in the same $\ti[r]$ tax.
	\emph{Standard Taxation}, where $r$ imposes a custom tax on each producer $p \in \mathcal{P}^r$, i.e., a tax $\ti[r,p]$.
	\emph{Carbon-Taxation}, where $\ti[r]$ is proportional to $\Cemm[r,p]$ for any $p$, i.e., $\ti[r] = \Cemm[r,p] \textcolor{blue}{\bf t_{e}}^{r}$ with $\textcolor{blue}{\bf t_{e}}^{r}$ being the per-unit emission tax.} \\ \bottomrule
\end{tabularx}
\caption{Description of the parameters for our instances.}.
\label{tab:Parameters}
\end{table}

\subsection{Full Results Tables} \cref{tab:MNE:A,tab:MNE:B} contains the full results for \emph{InstanceSetA} and \emph{InstanceSetB}, respectively. The first three columns are the instance number ($\#$), the number of leaders ($n$), and -- for each leader -- their respective number of followers in squared parenthesis ($F$). In the following $10$ columns, we report the computing time associated with the \cref{Alg:FullEnumeration} (\emph{FE}) and the various inner approximation configurations. Specifically, each column identifies the algorithm by its name, and possibly its extension strategy and $\hat{\theta}$ parameter. In the column $MNE$, we report the instance's status, namely, whether at least one algorithm determined an \MNE exists ($YES$) or not ($NO$). Finally, in the last two columns, we report the computing time for the \PNE version of \cref{Alg:FullEnumeration} (\emph{FE-P}) and the status associated with the problem of determining a \PNE.

\cref{tab:Insights} reports the results for \emph{InstanceSet Insights}. We report the numeric values of $0$ with dashes.
The columns are, in order of appearance: 
\begin{enumerate*}[label=(\roman*)]
\item the instance's number ($\#$), and
\item the boolean tax switch (T$^a$), i.e., T$^a$ if taxation is allowed, and
\item the boolean trade switch (T$^r$), and
\item the set of results associated with each regulator's market.
\end{enumerate*}
Specifically, for both \emph{Country One} and \emph{Country Two}, we report:
\begin{enumerate*}[label=(\roman*)]
\item the unit-energy production level (\emph{qTot}), and
\item the domestic price per unit-energy (\emph{\$D}), and
\item the imports (\emph{I}) and exports (\emph{E}), and
\item the export price per unit-energy (\emph{\$E}), and
\item the tax level per unit-energy (\emph{T}).
\end{enumerate*}
For each producer, we report: 
\begin{enumerate*}[label=(\roman*)]
\item the type (\emph{Ty}), namely, $C$ for coal, $G$ for gas, or $S$ for solar, and
\item the emission cost per unit-energy (\emph{EC}), and 
\item the production level ($q$).
\end{enumerate*}

\begin{landscape}
{\scriptsize

}

\end{landscape}

\end{document}